\newtheorem{theorem}{Theorem}[section]
\newtheorem{example}[theorem]{Example}
\newtheorem{remark}[theorem]{Remark}
\newtheorem{corollary}[theorem]{Corollary}
\def\blfootnote{\xdef\@thefnmark{}\@footnotetext}\makeatother
\begin{document}

%\begin{frontmatter}

%% Title, authors and addresses

%% use the tnoteref command within \title for footnotes;
%% use the tnotetext command for the associated footnote;
%% use the fnref command within \author or \address for footnotes;
%% use the fntext command for the associated footnote;
%% use the corref command within \author for corresponding author footnotes;
%% use the cortext command for the associated footnote;
%% use the ead command for the email address,
%% and the form \ead[url] for the home page:
%%
%% \title{Title\tnoteref{label1}}
%% \tnotetext[label1]{}
%% \author{Name\corref{cor1}\fnref{label2}}
%% \ead{email address}
%% \ead[url]{home page}
%% \fntext[label2]{}
%% \cortext[cor1]{}
%% \address{Address\fnref{label3}}
%% \fntext[label3]{}

\title{Construction of zero autocorrelation stochastic waveforms\small{*}}

%% use optional labels to link authors explicitly to addresses:
%% \author[label1,label2]{<author name>}
%% \address[label1]{<address>}
%% \address[label2]{<address>}

\author{Somantika Datta \\ Department of Mathematics, University of Idaho, \\ Moscow, Idaho, 83844-1103 USA \\ sdatta@uidaho.edu}
\date{}
%\address{Department of Mathematics, University of Idaho, Moscow, Idaho,
%        83844-1103 USA, sdatta@uidaho.edu}
\maketitle
\blfootnote{*This work was supported by AFOSR Grant No. FA9550-10-1-0441}
\begin{abstract}
%% Text of abstract
Stochastic waveforms are constructed whose expected autocorrelation can be made arbitrarily small outside the origin. These waveforms are unimodular and complex-valued. Waveforms with such spike like autocorrelation are desirable in waveform design and are particularly useful in areas of radar and communications. Both discrete and continuous waveforms with low expected autocorrelation are constructed. Further, in the discrete case, frames for $\mathbb{C}^d$ are constructed from these waveforms and the frame properties of such frames are studied.
\end{abstract}

%\begin{keyword}
\textbf{Keywords:} Autocorrelation, Frames, Stochastic waveforms
%% keywords here, in the form: keyword \sep keyword

%% MSC codes here, in the form: \MSC code \sep code
%% or \MSC[2008] code \sep code (2000 is the default)

%\end{keyword}

%\end{frontmatter}

%%
%% Start line numbering here if you want
%%
% \linenumbers

%% main text

%% The Appendices part is started with the command \appendix;
%% appendix sections are then done as normal sections
%% \appendix

%% \section{}
%% \label{}
%\begin{AMS}
%91A28, 37A50, 42C15
%\end{AMS}

\section{Introduction}
\subsection{Motivation}
%\footnote{Draft: \currfilename}
Designing unimodular waveforms with an impulse-like autocorrelation is
central in the general area of waveform design, and it is
particularly relevant in several applications in the areas of
radar and communications. In the former,  the waveforms
 can play a role in effective
target recognition, e.g., \cite{AB1}, \cite{JBJD1},
\cite{HK1}, \cite{LM1}, \cite{MLL1},  \cite{MWH1}, \cite{FEN1},
\cite{GWS1}; and in the latter they are used to address
synchronization issues in cellular (phone) access technologies,
especially code division multiple access (CDMA), e.g., \cite{UY1},
\cite{SV1}. The radar and communications methods combine in recent
advanced multifunction RF systems (AMRFS). In radar there are two main reasons that the waveforms should
be unimodular, that is, have constant amplitude. First, a
transmitter can operate at peak power if the signal has constant peak
amplitude - the system does not have to deal with the surprise of
greater than expected amplitudes. Second, amplitude variations
during transmission due to additive noise can be theoretically
eliminated. The zero autocorrelation property ensures minimum
interference between signals sharing the same channel.

%\subsection{Mathematical Approach}
Constructing unimodular waveforms with zero autocorrelation can be related to fundamental questions in harmonic analysis as follows. Let $\mathbb{R}$ be the real numbers, $\mathbb{Z}$ the integers, $\mathbb{C}$ the complex numbers, and set $\mathbb{T} = \mathbb{R}/\mathbb{Z}.$
The \textit{aperiodic autocorrelation} $A_X : \mathbb{Z} \to \mathbb{C}$
of a waveform $X : \mathbb{Z} \to \mathbb{C}$ is defined as
\begin{equation}\label{aperiodicAutocorrelation}
\forall k \in \mathbb{Z}, \quad A_X[k] = \lim_{N \to \infty}
\frac{1}{2N+1}\sum_{m = -N}^N X[k+m]\overline{X[m]}.
\end{equation}
A general problem is to characterize the
family of positive bounded Radon measures $F,$ whose inverse
Fourier transforms are the autocorrelations of bounded waveforms $X.$
A special case is when $F \equiv 1$ on
$\mathbb{T}$ and $X$ is unimodular on $\mathbb{Z}.$ This is the same as when the autocorrelation
of $X$ vanishes except at $0,$ where it takes the value $1.$ In this case $X$ is said to have \textit{perfect} autocorrelation.
An extensive discussion on the construction of different classes of \emph{deterministic} waveforms with perfect autocorrelation can be found in \cite{BD1}. Instead of aperiodic waveforms that are defined on $\mathbb{Z},$ in some applications, it might be useful to construct periodic waveforms with similar vanishing properties of the autocorrelation function.
 Let $n \geq 1$  be an integer and $\mathbb{Z}_n$ be the finite group $\{0, 1, \ldots,  n-1\}$ with addition modulo $n.$ The periodic autocorrelation $A_X:\mathbb{Z}_n \to \mathbb{C}$ of a waveform $X : \mathbb{Z}_n \to \mathbb{C}$ is defined as
 \begin{equation}\label{periodicAutocorrelation}
 \forall k= 0, 1, \ldots, n-1, \quad A_X[k] = \frac 1n \sum_{m=0}^{n-1} X[m+k]\overline{X[m]}.
  \end{equation}
 % We refer to (\ref{aperiodicAutocorrelation}) as the \textit{aperiodic autocorrelation} and (\ref{periodicAutocorrelation}) as the \emph{periodic autocorrelation}.
 It is said that $X: \mathbb{Z}_n \to \mathbb{C}$ is a constant amplitude zero autocorrelation (CAZAC) waveform if each $|X[k]| = 1$ and
$$\forall k= 1, \ldots, n-1, \quad A_X[k] = \frac{1}{n}\sum_{m=0}^{n-1}X[m+k]\overline{X[m]} = 0.
$$
The literature on CAZACs is overwhelming. A good reference on this topic is \cite{HK1}, among many others. Literature on the general area of waveform design include \cite{Cochran05}, \cite{Bell93}, \cite{SYPMCR09}. Comparison between periodic and aperiodic autocorrelation can be found in \cite{BS2000}.

Here the focus is on the construction of stochastic aperiodic waveforms. Henceforth, the reference to waveforms shall imply aperiodic waveforms unless stated otherwise. These waveforms are stochastic in nature and are constructed from certain random variables. Due to the stochastic nature of the construction, the expected value of the corresponding autocorrelation function is analyzed. It is desired that everywhere away from zero, the expectation of the autocorrelation can be made arbitrarily small. Such waveforms will be said to have \emph{almost perfect} autocorrelation and will be called \emph{zero autocorrelation stochastic waveforms}. First discrete waveforms, $X:\mathbb{Z} \to \mathbb{C},$ are constructed such that $X$ has almost perfect autocorrelation and for all $n \in \mathbb{Z},$ $|X[n]| = 1.$ This approach is extended to the construction of continuous waveforms, $x: \mathbb{R} \to \mathbb{C}$, with similar spike like behavior of the expected autocorrelation and $|x(t)| = 1$ for all $t\in \mathbb{R}.$ Thus, these waveforms are unimodular. The stochastic and non-repetitive nature of these waveforms means that they cannot be easily intercepted or detected by an adversary. Previous work on the use of stochastic waveforms in radar can be found in \cite{Nar08}, \cite{NXHC98}, \cite{BellNar01} where the waveforms are only real-valued and not unimodular. In comparison, the waveforms constructed here are complex valued and unimodular. In addition, frame properties of frames constructed from these stochastic waveforms are discussed. This is motivated by the fact that frames have become a standard tool in signal processing. Previously, a mathematical characterization of CAZACs in terms of finite unit-normed tight frames (FUNTFs) has been done in \cite{JBJD1}.
%\textbf{Characterization in terms of frames. Connection to work done by Benedetto-Donatteli.}

\subsection{Notation and mathematical background}\label{Notation}
Let $X$ be a random variable with probability density function $f.$ Assuming $X$ to be absolutely continuous, the \emph{expectation} of $X,$ denoted by $E(X),$
is $$E(X) = \int_{\mathbb{R}} x f(x) \ \textrm{d}x.$$ The Gaussian random variable has probability density function given by $f(x) = \frac{1}{\sigma\sqrt{2\pi}} e^{-\frac12(\frac{x - \mu}{\sigma})^2}.$ The mean or expectation of this random variable is $\mu$ and the variance, $V(X),$ is $\sigma^2.$ In this case it is also said that $X$ follows a normal distribution and is written as $X \sim N(\mu, \sigma^2).$
The \emph{characteristic function} of $X$ at $t,$ $E(e^{itX}),$ is denoted by $\phi_{X}(t).$ For further properties of expectation and characteristic function of a random variable the reader is referred to \cite{Karr93}.

Let $\mathcal{H}$ be a Hilbert space and let $V = \{v_k, \ k \in \mathcal{K} \},$ where $\mathcal{K}$ is some index set, be a collection of vectors in $\mathcal{H}.$ Then $V$ is said to be a \emph{frame} for $\mathcal{H}$ if there exist constants $A$ and $B,$ $0< A \leq B < \infty,$ such that for any $v \in \mathcal{H}$
$$A \|v\|^2 \leq \sum_{k \in \mathcal{K}} |\langle v, v_k\rangle|^2\leq B\|v\|^2.$$
The constants $A$ and $B$ are called the \emph{frame bounds}. Thus a frame can be thought of as a redundant basis. In fact, for a finite dimensional vector space, a frame is the same as a spanning set. If $A = B,$ the frame is said to be \emph{tight}. Orthonormal bases are special cases of tight frames and for these, $A = B = 1.$

If $V$ is a frame for $\mathcal{H}$ then the map $F: \mathcal{H} \to \ell_2(\mathcal{K})$ given by $F(v) = \{\langle v, v_k\rangle \ : \ k \in \mathcal{K}\}$ is called the analysis operator. The synthesis operator is the adjoint map $F^*: \ell_2(\mathcal{K}) \to \mathcal{H},$ given by $$F^*(\{a_k\}) = \sum_{k \in \mathcal{K}} a_k v_k.$$ The \textit{frame operator} $\mathcal{F}: \mathcal{H} \to \mathcal{H}$ is given by $\mathcal{F} = F^*F.$ For a tight frame, the frame operator is just a constant multiple of the identity, i.e., $\mathcal{F} = A\mathcal{I},$ where $\mathcal{I}$ is the identity map.
Every $v \in \mathcal{H}$ can be represented as
$$v = \sum_{k \in \mathcal{K}}\langle v, \mathcal{F}^{-1} v_k \rangle v_k = \sum_{k \in \mathcal{K}}\langle v,  v_k \rangle \mathcal{F}^{-1} v_k. $$
Here $\{\mathcal{F}^{-1} v_k\}$ is also a frame and is called the dual frame.
For a tight frame, $\mathcal{F}^{-1}$ is just $\frac 1A \mathcal{F}.$ Tight frames are thus highly desirable since they offer a computationally simple reconstruction formula that does not involve inverting the frame operator.
The minimum and maximum eigenvalues of $\mathcal{F}$ are the optimal lower and upper frame bounds respectively \cite{Chr03}. Thus, for a tight frame all the eigenvalues of the frame operator are equal to each other. For the general theory on frames one can refer to \cite{Chr03}, \cite{Dau92}.

%\textbf{Fundamentals of frame theory and probability theory.}
\subsection{Outline} \label{Outline}
The construction of discrete unimodular stochastic waveforms, $X: \mathbb{Z} \to \mathbb{C},$ with almost perfect autocorrelation is done in Section \ref{discretestochasticCAZAC}. This is first done with the Gaussian random variable and then generalized to other random variables. The variance of the autocorrelation is also estimated. The section also addresses the construction of stochastic waveforms in higher dimensions, i.e., construction of $v: \mathbb{Z} \to \mathbb{C}^d$ that have almost perfect autocorrelation and are unit-normed, considering the usual norm in $\mathbb{C}^d.$ In Section \ref{stochasticCAZAC} the construction of unimodular continuous waveforms with almost perfect autocorrelation is done using Brownian motion.

As mentioned in Section \ref{Notation}, frames are now a standard tool in signal processing due to their effectiveness in robust signal transmission and reconstruction. In Section \ref{StochasticFrames}, frames in $\mathbb{C}^d$ $(d \geq 2)$ are constructed from the discrete waveforms of Section \ref{discretestochasticCAZAC} and the nature of these frames is analyzed. In particular, the maximum and minimum eigenvalues of the frame operator are estimated. This helps one to understand how close these frames are to being tight. Besides, it follows, from the eigenvalue estimates, that the matrix of the analysis operator, $F,$ for such frames, can be used as a sensing matrix in compressed sensing.
\section{Construction of discrete stochastic waveforms}\label{discretestochasticCAZAC}
%\subsection{Discrete Case}\label{discrete_waveforms}
 In this section discrete unimodular waveforms, $X: \mathbb{Z} \to \mathbb{C},$ are constructed from random variables such that the expectation of the autocorrelation can be made arbitrarily small everywhere except at the origin. First, such a construction is done using the Gaussian random variable. Next, a general characterization of all random variables that can be used for the purpose is given.

\subsection{Construction from Gaussian random variables}\label{discrete_waveforms_Gaussian}
Let $\{Y_{\ell}\}_{\ell \in \mathbb{Z}}$ be independent identically distributed (i.i.d.) random variables following a Gaussian or normal distribution with mean $0$ and variance $\sigma^2,$ i.e., $Y_{\ell} \sim N(0, \sigma^2).$
%Let $Z[0] = 0$ and $Z[n] = Y_1 + \cdots + Y_n.$
Define $X: \mathbb{Z} \to \mathbb{C}$ by
%$\{X[n]\}_{n \in \mathbb{Z}}$ as
\begin{equation} \label{eq_stoc_CAZAC}
\forall n \in \mathbb{Z}, \quad X[n] = e^{\frac{2\pi i}{\epsilon}\sum_{\ell = -n}^{n} Y_{\ell}}
\end{equation}
where $i$ is $\sqrt{-1}.$ Thus, for each $n,$ $|X[n]| = 1$ and $X$ is unimodular.
The autocorrelation of $X$ at $k \in \mathbb{Z}$ is
$$A_X[k] = \lim_{N \to \infty} \frac{1}{2N+1}\sum_{n = -N}^N X[n+k]\overline{X[n]}$$ where the limit is in the sense of probability. Theorem \ref{DiscreteGaussianWaveforms} shows that the waveform given by (\ref{eq_stoc_CAZAC}) has autocorrelation whose expectation can be made arbitrarily small for all integers $k \neq 0.$
\begin{theorem}\label{DiscreteGaussianWaveforms}
Given $\epsilon > 0,$ the waveform $X:\mathbb{Z} \to \mathbb{C}$ defined in (\ref{eq_stoc_CAZAC}) has autocorrelation $A_X$ such that
\begin{equation*}
E(A_X[k]) = \left\{ \begin{array}{ll}
1 & \textrm{if $k=0$} \\ e^{-|k|\sigma^2\left(\frac{2\pi}{\epsilon}\right)^2 } & \textrm{if $k \neq 0$}.
\end{array}
\right.
\end{equation*}
\end{theorem}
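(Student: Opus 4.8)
The plan is to reduce the expected autocorrelation to a single Gaussian characteristic-function evaluation by exploiting the additive structure in the exponent of (\ref{eq_stoc_CAZAC}). First I would form the product inside the autocorrelation sum and write it as a single complex exponential:
\begin{equation*}
X[n+k]\overline{X[n]} = \exp\!\left(\frac{2\pi i}{\epsilon}\left(\sum_{\ell=-(n+k)}^{n+k} Y_\ell - \sum_{\ell=-n}^{n} Y_\ell\right)\right).
\end{equation*}
The crucial observation is that the two index sets almost coincide, so the difference of the sums telescopes down to the ``extra'' terms only. For $k>0$ these are the indices $\ell = n+1,\dots,n+k$ together with $\ell = -(n+1),\dots,-(n+k)$, giving exactly $2k$ terms; for $k<0$ one instead subtracts the $2|k|$ terms that lie in the shorter range, again a sum of $2|k|$ of the $Y_\ell$ (up to an irrelevant overall sign). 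Thus the exponent is $\frac{2\pi}{\epsilon}S$ where $S$ is a sum of $2|k|$ independent copies of $N(0,\sigma^2)$, and hence $S \sim N(0, 2|k|\sigma^2)$.

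Next I would compute the expectation of a single term using the characteristic function. Since $E(X[n+k]\overline{X[n]}) = E\bigl(e^{\frac{2\pi i}{\epsilon}S}\bigr) = \phi_S(2\pi/\epsilon)$ and the characteristic function of an $N(0,v)$ variable is $e^{-vt^2/2}$, I substitute $v = 2|k|\sigma^2$ and $t = 2\pi/\epsilon$ to obtain
\begin{equation*}
E\bigl(X[n+k]\overline{X[n]}\bigr) = \exp\!\left(-\tfrac{1}{2}\,(2|k|\sigma^2)\left(\tfrac{2\pi}{\epsilon}\right)^{2}\right) = e^{-|k|\sigma^{2}\left(\frac{2\pi}{\epsilon}\right)^{2}}.
\end{equation*}
The key point here is that this value does not depend on $n$, because the cancellation always leaves $2|k|$ fresh i.i.d.\ Gaussians regardless of the shift position. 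This $n$-independence is what makes the averaging trivial.

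Finally I would pass to the autocorrelation. By linearity of expectation and the $n$-independence just noted, for every finite $N$ the Cesàro average satisfies
\begin{equation*}
E\!\left(\frac{1}{2N+1}\sum_{n=-N}^{N} X[n+k]\overline{X[n]}\right) = \frac{1}{2N+1}\sum_{n=-N}^{N} e^{-|k|\sigma^{2}\left(\frac{2\pi}{\epsilon}\right)^{2}} = e^{-|k|\sigma^{2}\left(\frac{2\pi}{\epsilon}\right)^{2}},
\end{equation*}
so the expectation is constant in $N$ and the claimed value for $k\neq 0$ follows by letting $N\to\infty$. The case $k=0$ is immediate since $X[n]\overline{X[n]} = |X[n]|^2 = 1$ by unimodularity, giving $E(A_X[0]) = 1$. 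I expect the only delicate point to be the interchange of expectation with the limit in probability defining $A_X[k]$; however, because the pre-limit expectations are all exactly equal to the target constant, this interchange is essentially free and requires no convergence machinery beyond observing that a constant sequence converges to that constant.
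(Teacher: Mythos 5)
Your proof is correct and follows essentially the same route as the paper: telescope the exponents to a sum of $2|k|$ i.i.d.\ Gaussians, evaluate the characteristic function at $2\pi/\epsilon$, and use the $n$-independence to collapse the Ces\`aro average. The only difference is that the paper justifies the interchange of expectation and limit explicitly via the Dominated Convergence Theorem (using $|g_N|\leq 1$), a step you flag but treat informally; since unimodularity supplies the dominating constant, this is presentational rather than a substantive gap.
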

\begin{proof}
(i) When $k = 0,$ $$A_X[0] = \lim_{N \to \infty}\frac{1}{2N+1}\sum_{n = -N}^N X[n]\overline{X[n]} = 1.$$ and so $E(A_X[0]) = 1.$
\\
(ii) Let $k > 0.$ One would like to calculate
$$E(A_X[k]) = E\left(\lim_{N \to \infty} \frac{1}{2N+1} \sum_{n = -N}^N X[n+k]\overline{X[n]}\right).$$
Let $g_N(X) = \frac{1}{2N+1} \sum_{n = -N}^N X[n+k] \overline{X[n]}.$ Then $|g_N(X)| \leq 1.$ Let $h(X) = 1.$ Then for each $N,$ $|g_N(X)| \leq h(X)$ and $E[h(X)] = 1.$
 Thus, by the Dominated Convergence Theorem \cite{Karr93}, which justifies the interchange of limit and integration below, one obtains
\begin{eqnarray*}
 E(A_X[k]) &=& E\left(\lim_{N \to \infty} \frac{1}{2N+1} \sum_{n = -N}^N X[n+k]\overline{X[n]}\right) \\
&=&  \lim_{N \to \infty} \frac{1}{2N+1} \sum_{n = -N}^N E(X[n+k]\overline{X[n]}) \\
%&=& \lim_{N \to \infty} \frac{1}{2N+1} \sum_{n = -N}^N %E(e^{\frac{2\pi i}{\epsilon}(Z[n+k] - Z[n])}) \\
&=& \lim_{N \to \infty} \frac{1}{2N+1} \sum_{n = -N}^N E(e^{\frac{2\pi i}{\epsilon}\sum_{\ell = -n-k}^{n+k} Y_{\ell}}e^{-\frac{2\pi i}{\epsilon}\sum_{m = - n}^{n} Y_{m}}) \\
&=& \lim_{N \to \infty} \frac{1}{2N+1} \sum_{n = -N}^N E(e^{\frac{2\pi i}{\epsilon}(\sum_{\ell = n+1}^{n+k} Y_{\ell} + \sum_{m = -n-k}^{-n-1} Y_{m})}) \\
&=& \lim_{N \to \infty} \frac{1}{2N+1} \sum_{n = -N}^N \left[E\left(e^{\frac{2\pi i }{\epsilon}Y_1} \right)\right]^{2k} = \left[ E\left(e^{\frac{2\pi i }{\epsilon}Y_1} \right) \right]^{2k}= \left[\phi_{Y_1}\left(\frac{2\pi}{\epsilon}\right)\right]^{2k}
%&=& \lim_{N \to \infty} \frac{1}{2N+1} \sum_{n = -N}^N \phi_{\sum_{\ell = n+1}^{n+k} Y_{\ell} + \sum_{m = -n-k}^{-n-1} Y_{m}}\left(\frac{2\pi}{\epsilon}\right). %\\
%&=& \lim_{N \to \infty} \frac{1}{2N+1} \sum_{n = -N}^N \left[\phi_{Y_1}\left(\frac{2\pi}{\epsilon}\right)\right]^k.
%&=& \lim_{N \to \infty} \frac{1}{2N+1} \sum_{n = -N}^N \phi_{Z[n+k] - Z[n]}(\frac{2\pi}{\epsilon}). \\
\end{eqnarray*}
where the last line uses the fact that the $Y_{\ell}$s are i.i.d. random variables. Here $\phi_{Y_1}\left(\frac{2\pi}{\epsilon}\right)$ is the characteristic function at $\frac{2\pi}{\epsilon}$ of $Y_1$ which is the same as that for any other $Y_{\ell}$ due to their identical distribution.
The characteristic function at $\frac{2\pi}{\epsilon}$ of a Gaussian random variable with mean $0$ and variance $\sigma^2$ is $e^{-\frac{\sigma^2}{2}\left(\frac{2\pi}{\epsilon}\right)^2 }.$ Thus
\begin{eqnarray*}
E(A_X[k]) &=& \left[e^{-\frac{\sigma^2}{2}\left(\frac{2\pi}{\epsilon}\right)^2 }\right]^{2k} = e^{- k \sigma^2 \left(\frac{2\pi}{\epsilon}\right)^2 }.
\end{eqnarray*}
(iii) When $k>0,$ a similar calculation for $A_X[-k]$ gives
$$E(A_X[-k]) = \left[e^{-\frac{\sigma^2}{2}\left(-\frac{2\pi}{\epsilon}\right)^2} \right]^{2k} = e^{- k \sigma^2 \left(\frac{2\pi}{\epsilon}\right)^2 }.$$
 Together, this shows that given $\epsilon$ and any $k \neq 0,$ $$E(A_X[k]) = e^{- |k|\sigma^2(\frac{2\pi}{\epsilon})^2}$$ which indicates that the expectation of the autocorrelation at any integer $k \neq 0$ can be made arbitrarily small depending on the choice of $\epsilon.$
 \end{proof}
 %
 %\subsection{Variance of the Autocorrelation}

As shown in Theorem \ref{DiscreteGaussianWaveforms} the expectation of the autocorrelation can be made arbitrarily small but this is not useful unless one can estimate the variance of the autocorrelation.
 %In this subsection the variance of the autocorrelation is calculated for the case of construction from Gaussians as shown in Section \ref{discrete_waveforms_Gaussian}.
 Denoting the variance of $A_X[k]$ by $V(A_X[k])$ one has
 $$ \ V(A_X[k]) = E(|A_X[k]|^2) - |E(A_X[k])|^2 = E(|A_X[k]|^2) - e^{-2|k|\sigma^2(\frac{2\pi}{\epsilon})^2}.$$
 First consider $k>0;$
 \begin{eqnarray*}
 |A_X[k]|^2 &=& \left(\lim_{N \to \infty} \frac{1}{2N+1}\sum_{n = -N}^N X[n+k]\overline{X[n]}\right) \left(\lim_{M \to \infty} \frac{1}{2M+1}\sum_{m = -M}^M \overline{X[m+k]}X[m]\right) \\
 &=& \lim_{N \to \infty} \lim_{M \to \infty} \frac{1}{(2N+1)} \frac{1}{(2M+1)} \sum_{n = -N}^N \sum_{m = -M}^M X[n+k]\overline{X[n]} \ \overline{X[m+k]}X[m] \\
 &=& \lim_{N \to \infty} \lim_{M \to \infty} \frac{1}{(2N+1)} \frac{1}{(2M+1)} \sum_{n = -N}^N \sum_{m = -M}^M \\
 && \phantom{+++++++++++++} e^{\frac{2\pi}{\epsilon} i \left(\sum_{j = 1}^k Y_{-n-j}+ \sum_{j = 1}^k Y_{n+j} - \sum_{j = 1}^k Y_{-m-j}  - \sum_{j = 1}^k Y_{m+j} \right)}\ .
 \end{eqnarray*}
 By applying the Lebesgue Dominated Convergence Theorem one can bring the expectation inside the double sum to get
 \begin{eqnarray*}
 E(|A_X[k]|^2) &=& \lim_{N \to \infty} \lim_{M \to \infty} \frac{1}{(2N+1)} \frac{1}{(2M+1)} \sum_{n = -N}^N \sum_{m = -M}^M \\
 && \phantom{+++++++++++} E\left(e^{\frac{2\pi}{\epsilon} i \left(\sum_{j = 1}^k Y_{-n-j}+ \sum_{j = 1}^k Y_{n+j} - \sum_{j = 1}^k Y_{-m-j}  - \sum_{j = 1}^k Y_{m+j} \right)}\right).
 \end{eqnarray*}
 The sum
 \begin{equation}\label{suminexponent}
 \sum_{j = 1}^k Y_{-n-j}+ \sum_{j = 1}^k Y_{n+j} - \sum_{j = 1}^k Y_{-m-j}  - \sum_{j = 1}^k Y_{m+j}
 \end{equation}
 may have cancelations among terms involving $n$ with terms involving $m.$
 %For each $n$ there are two symmetric parts in the sums in the exponent, with indices going from $-n-k$ to $-n-1$ and from $n+1$ to $n+k.$ The same is true for each $m.$
 Suppose that for a fixed $n$ and $m,$ there are $\tilde{k}_{m,n}$ indices that cancel in each of the 4 sums in (\ref{suminexponent}). Due to symmetry, the same number i.e., $\tilde{k}_{mn},$ of terms will cancel in each sum. Depending on $n$ and $m,$ $\tilde{k}_{mn}$ lies between $0$ and $k,$ i.e.,
 $0 \leq \tilde{k}_{mn} \leq k.$
 For the sake of making the notation less cumbersome, $\tilde{k}_{mn}$ will from now on be written as $\tilde{k}.$ When $m = n,$ $\tilde{k} = k.$ If $m > n+k$ or $n > m+k$ then $\tilde{k} =0.$ Each sum in (\ref{suminexponent}) has $k$ terms and  $\tilde{k}$ of these get cancelled leaving $(k - \tilde{k})$ terms. One can re-index the variables in (\ref{suminexponent}) and write it as
$$\sum_{j = 1}^k Y_{-n-j}+ \sum_{j = 1}^k Y_{n+j} - \sum_{j = 1}^k Y_{-m-j}  - \sum_{j = 1}^k Y_{m+j} = \pm Y_{\ell_1} \pm \cdots \pm Y_{\ell_{4(k - \tilde{k})}}$$
%\begin{eqnarray*}
% && \textrm{for $m < n:$} \quad \sum_{j = 1}^k Y_{-n-j}+ \sum_{j = 1}^k Y_{n+j} - \sum_{j = 1}^k Y_{-m-j}  - \sum_{j = 1}^k Y_{m+j} = Y_{\ell_1} + \cdots + Y_{\ell_{4(k - \tilde{k})}} \ ;\\
% && \textrm{for $m > n:$} \quad \sum_{j = 1}^k Y_{-n-j}+ \sum_{j = 1}^k Y_{n+j} - \sum_{j = 1}^k Y_{-m-j}  - \sum_{j = 1}^k Y_{m+j} =  - (Y_{\ell_1} + \cdots + Y_{\ell_{4(k - \tilde{k})}})
% \end{eqnarray*}
 %$$\sum_{\ell = 1}^k Y_{\ell + n}+ \sum_{j = 1}^k Y_{j + m} = Y_{n_1} + Y_{n_2} + \cdots + Y_{n_{k - \tilde{k}}} + Y_{m_1} + Y_{m_2} + \cdots + Y_{m_{k - \tilde{k}}} + 2(Y_{\ell_1}+ \cdots + Y_{\ell_{\tilde{k}}}).$$
 where the sign depends on whether $m$ is less than or greater than $n.$
 Thus
\begin{eqnarray*}
E(|A_X[k]|^2) &=& \lim_{N \to \infty} \lim_{M \to \infty} \frac{1}{(2N+1)} \frac{1}{(2M+1)} \sum_{n = -N}^N \sum_{m = -M}^M E\left(e^{\frac{2\pi}{\epsilon}i \left(\pm Y_{\ell_1} \pm \cdots \pm Y_{\ell_{4(k - \tilde{k})}}\right)}\right)
\end{eqnarray*}
Due to the independence of the $Y_{\ell}$s, this means
\begin{eqnarray*}
E(|A_X[k]|^2) &=& \lim_{N \to \infty} \lim_{M \to \infty} \frac{1}{(2N+1)} \frac{1}{(2M+1)} \sum_{n = -N}^N \sum_{m = -M}^M \left[\phi_{Y_1}\left(\pm \frac{2\pi}{\epsilon}\right)\right]^{4(k - \tilde{k})} \\ %\label{variancecalc}\\
&=& \lim_{N \to \infty} \lim_{M \to \infty} \frac{1}{(2N+1)} \frac{1}{(2M+1)} \sum_{n = -N}^N \sum_{m = -M}^M e^{-\frac{\sigma^2}{2}\left(\frac{2\pi}{\epsilon}\right)^2 4(k - \tilde{k})} \nonumber %\geq e^{-\frac{\sigma^2}{2}\left(\frac{2\pi}{\epsilon}\right)^2 4k}
\end{eqnarray*}
The minimum is attained for $\tilde{k} = 0$ and the maximum at $\tilde{k} = k.$ Thus
\begin{eqnarray*}
E(|A_X[k]|^2) &\leq& \lim_{N \to \infty} \lim_{M \to \infty} \frac{1}{(2N+1)} \frac{1}{(2M+1)} \sum_{n = -N}^N \sum_{m = -M}^M 1 = 1 \quad \textrm{and} \\
%\end{eqnarray*}
%and
%\begin{eqnarray*}
E(|A_X[k]|^2) &\geq& \lim_{N \to \infty} \lim_{M \to \infty} \frac{1}{(2N+1)} \frac{1}{(2M+1)} \sum_{n = -N}^N \sum_{m = -M}^M
e^{-\frac{\sigma^2}{2}\left(\frac{2\pi}{\epsilon}\right)^2 4k} = e^{-\frac{\sigma^2}{2}\left(\frac{2\pi}{\epsilon}\right)^2 4k}.
\end{eqnarray*}
This gives
$$0 \leq V(A_X[k]) \leq 1 - e^{-2k\sigma^2\left(\frac{2\pi}{\epsilon}\right)^2 }.$$
A similar calculation can be done for $k <0.$ Thus for $k \neq 0,$
$$0 \leq V(A_X[k]) \leq 1 - e^{-2|k|\sigma^2\left(\frac{2\pi}{\epsilon}\right)^2 }.$$
 \subsection{Generalizing the construction to other random variables}
 So far the construction of discrete unimodular zero autocorrelation stochastic waveforms has been based on Gaussian random variables. This construction can be generalized to many other random variables. The unimodularity of the waveforms is not affected by using a different random variable.
 The following theorem characterizes the class of random variables that can be used to get the desired autocorrelation.
 \begin{theorem}\label{GeneralizationGaussian}
 Let $\{Y_{\ell}\}_{\ell \in \mathbb{Z}}$ be a sequence of i.i.d. random variables with characteristic function  $\phi_Y.$ Suppose that the probability density function of the $Y_{\ell}$s is even and that $\phi_Y(t)$ goes to $0$ as $t$ goes to infinity. Then, given $\epsilon,$ the waveform $X:\mathbb{Z} \to \mathbb{C}$ given by $$X[n] = e^{\frac{2\pi}{\epsilon} i \sum_{\ell=-n}^n Y_{\ell}}$$ has almost perfect autocorrelation.
 \end{theorem}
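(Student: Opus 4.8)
The plan is to follow the template of the proof of Theorem~\ref{DiscreteGaussianWaveforms} almost verbatim, with the explicit Gaussian characteristic function replaced by the abstract $\phi_Y$, and to locate precisely where each of the two hypotheses is used. The first preliminary observation is that since the common density $f$ of the $Y_\ell$ is even, its characteristic function is real-valued and even: writing $\phi_Y(t) = \int \cos(tx)f(x)\,dx + i\int\sin(tx)f(x)\,dx$, the imaginary part vanishes because $x\mapsto \sin(tx)f(x)$ is odd, and $\phi_Y(-t)=\phi_Y(t)$ follows the same way. This is the fact that will let the $k<0$ case collapse onto the $k>0$ case and will guarantee that the expected autocorrelation is real.

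Next I would compute $E(A_X[k])$. For $k=0$ the argument is identical: $A_X[0]=1$ surely, so $E(A_X[0])=1$. For $k\neq 0$ the same dominated-convergence step applies, since $g_N(X)=\frac{1}{2N+1}\sum_{n=-N}^N X[n+k]\overline{X[n]}$ still satisfies $|g_N(X)|\le 1$, so the constant dominating function $h\equiv 1$ justifies interchanging the limit with the expectation. After the cancellation in $X[n+k]\overline{X[n]}$ there remain exactly $2|k|$ mutually independent variables $Y_\ell$ (for $k>0$ entering with coefficient $+\tfrac{2\pi}{\epsilon}$, and for $k<0$ with coefficient $-\tfrac{2\pi}{\epsilon}$); independence then factorizes the expectation into a product of characteristic-function values, giving $E(A_X[k]) = [\phi_Y(\tfrac{2\pi}{\epsilon})]^{2|k|}$ for $k>0$ and $[\phi_Y(-\tfrac{2\pi}{\epsilon})]^{2|k|}$ for $k<0$. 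Here the evenness of $\phi_Y$ is invoked to identify the two expressions, yielding the single clean formula $E(A_X[k]) = \big[\phi_Y(\tfrac{2\pi}{\epsilon})\big]^{2|k|}$ for every $k\neq 0$.

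Finally I would use the decay hypothesis $\phi_Y(t)\to 0$ as $t\to\infty$ to conclude. As $\epsilon\to 0$ the argument $\tfrac{2\pi}{\epsilon}\to\infty$, so $\phi_Y(\tfrac{2\pi}{\epsilon})\to 0$; in particular there is an $\epsilon_0$ below which $|\phi_Y(\tfrac{2\pi}{\epsilon})|<1$. Since $|k|\ge 1$ for all $k\neq 0$, for such $\epsilon$ one has the uniform bound $|E(A_X[k])| = |\phi_Y(\tfrac{2\pi}{\epsilon})|^{2|k|} \le |\phi_Y(\tfrac{2\pi}{\epsilon})|^{2}$, and the right-hand side tends to $0$ as $\epsilon\to 0$. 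Hence, given any tolerance, one can choose $\epsilon$ small enough that $|E(A_X[k])|$ is below it simultaneously for all $k\neq 0$, which is exactly the assertion that $X$ has almost perfect autocorrelation.

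I expect the only genuinely delicate points to be bookkeeping rather than analysis: first, verifying that after cancellation there really are $2|k|$ independent terms with the stated signs (the same combinatorial count used implicitly in the Gaussian proof), and second, making explicit that the smallness is uniform in $k$ — the inequality $|k|\ge 1$ together with $|\phi_Y(\tfrac{2\pi}{\epsilon})|<1$ is what upgrades ``small for each fixed $k$'' into ``uniformly small away from the origin.'' It is also worth remarking that because the $Y_\ell$ possess a density, the Riemann--Lebesgue lemma already forces $\phi_Y(t)\to 0$, so the decay hypothesis is essentially automatic here; the evenness hypothesis, by contrast, is the substantive one, since it is what keeps $E(A_X[k])$ real and symmetric in $k$.
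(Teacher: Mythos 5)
Your proposal is correct and follows essentially the same route as the paper: use the evenness of the density to make $\phi_Y$ real and even, repeat the dominated-convergence and independence computation from Theorem~\ref{DiscreteGaussianWaveforms} to get $E(A_X[k])=[\phi_Y(\tfrac{2\pi}{\epsilon})]^{2|k|}$, and invoke the decay of $\phi_Y$ at infinity. Your added observations --- the uniformity in $k$ via $|k|\ge 1$ and the Riemann--Lebesgue remark that the decay hypothesis is automatic for absolutely continuous $Y_\ell$ --- are correct refinements that the paper leaves implicit.
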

\begin{proof}
 Since the density function of each $Y_{\ell}$ is even this means that the characteristic function is real valued \cite{Karr93}. Following the calculation in the proof of Theorem \ref{DiscreteGaussianWaveforms}, the expected autocorrelation of $X$ for $k\neq 0$ is
 $$E(A_X[k]) = \left[\phi_{Y}\left(\frac{2\pi}{\epsilon}\right)\right]^{2|k|}$$
and this goes to zero with $\epsilon$ by the hypothesis.
\end{proof}
 \begin{example}\label{BilateralDist}
\rm
Suppose the $Y_{\ell}$s follow a \emph{bilateral} distribution that has density $e^{-|x|}$ with $x \in (-\infty, \infty)$ and characteristic function $\phi_Y(t) = \frac{1}{1+t^2}.$ Then for $k \neq 0,$
$$E(A_X[k]) = \left[ \frac{1}{1 + \left( \frac{2\pi}{\epsilon}\right)^2}\right]^{2|k|}$$ and this can be made arbitrarily small with $\epsilon.$

In the same way as was done in the Gaussian case, for $k > 0,$
\begin{eqnarray*}
E(|A_X[k]|^2) &=& \lim_{N \to \infty} \lim_{M \to \infty} \frac{1}{(2N+1)} \frac{1}{(2M+1)} \sum_{n = -N}^N \sum_{m = -M}^M
\left[\phi_{Y_1}\left(\pm \frac{2\pi}{\epsilon}\right)\right]^{4(k - \tilde{k})} \\
%&\leq& \frac{1}{1 + \left(\frac{2\pi}{\epsilon}\right)^2}
&\leq& 1
\qquad \textrm{and} \\
E(|A_X[k]|^2) &\geq& \left[\frac{1}{1 + \left(\frac{2\pi}{\epsilon}\right)^2}\right]^{4k}.
\end{eqnarray*} Thus
\begin{eqnarray*}
0 \leq V(A_X[k]) \leq 1 - \left(\frac{1}{1 + \left(\frac{2\pi}{\epsilon}\right)^2}\right)^{4|k|}.
%0 \leq V(A_x[k]) \leq \frac{1}{1 + \left(\frac{2\pi}{\epsilon}\right)^2}\left[1 - \left(\frac{1}{1 + \left(\frac{2\pi}{\epsilon}\right)^2}\right)^{4k-1}\right]
\end{eqnarray*}
%and this implies that the variance can also be made arbitrarily small depending on the choice of $\epsilon.$ The variance estimation in this case is therefore much better than in the case of Gaussian.
\end{example}
\begin{example}\label{CauchyDist}
\rm
Suppose that the $Y_{\ell}$s follow the \emph{Cauchy} distribution with density function $\frac{1}{\pi(1 + x^2)}.$ Note that, disregarding the constant $\pi,$ this is the characteristic function of the random variable considered in Example \ref{BilateralDist}. The characteristic function of the $Y_{\ell}$s is now $e^{-|t|},$ the same as the distribution function in Example \ref{BilateralDist}. For $k \neq 0,$
$$E(A_X[k]) = \left[\phi_{Y_1}\left(\frac{2\pi}{\epsilon}\right)\right]^{2|k|} = e^{-\frac{4\pi |k|}{\epsilon}}$$
which can be made arbitrarily small with $\epsilon.$ Also,
$$0 \leq V(A_X[k]) \leq 1 - e^{-\frac{8\pi |k|}{\epsilon}}.$$
\end{example}
\subsection{Higher dimensional case} \label{SectionHigherDim}
Here one is interested in constructing waveforms $v: \mathbb{Z} \to \mathbb{C}^d, $ $d \geq 2.$ It is desired that $v$ has unit norm and the expectation of its autocorrelation can be made arbitrarily small. One way to construct $v$ is based on the construction of the one dimensional example given in Section \ref{discrete_waveforms_Gaussian}. This is motivated by the higher dimensional construction in the deterministic case \cite{JBJD1}. As before, $\{Y_{\ell}\}_{\ell \in \mathbb{Z}}$ is a sequence of i.i.d. Gaussian random variables with mean zero and variance $\sigma^2.$ Next, one defines $X[n] = e^{\frac{2\pi}{\epsilon}i\sum_{\ell = -n}^n Y_{\ell}}.$ The waveform $v: \mathbb{Z} \to \mathbb{C}^d$ is then defined as
\begin{equation}\label{HigherDimSeq}
\forall m \in \mathbb{Z}, \quad v[m] = \frac{1}{\sqrt{d}}\left[\begin{array}{c}
X[m]\\
X[m+1] \\
\vdots
\\
X[m+d-1]
\end{array}\right].
\end{equation}
In this case, the autocorrelation is given by
\begin{equation} \label{HigherDimAutocor}
A_v[k] = \lim_{N \to \infty} \frac{1}{2N+1}\sum_{n = -N}^N \langle v[n + k], v[n]\rangle
\end{equation}
where $\langle . , . \rangle$ is the usual inner product in $\mathbb{C}^d.$ The length or norm of any $v[m]$ is thus given by $$\|v[m]\|^2 = \langle v[m], v[m]\rangle.$$ From (\ref{HigherDimSeq}),
\begin{eqnarray*}
\|v[m]\|^2 &=& \frac{1}{d}\sum_{n=0}^{d-1} X[m+n]\overline{X[m+n]} = \frac dd = 1.
\end{eqnarray*}
Thus the $v[m]$s are unit-normed. The following Theorem \ref{AutocorHighDim} shows that the expected autocorrelation of $v$ can be made arbitrarily small everywhere except at the origin.
\begin{theorem}\label{AutocorHighDim}
Given $\epsilon > 0,$ the waveform $v:\mathbb{Z} \to \mathbb{C}^d$ defined in (\ref{HigherDimSeq}) has autocorrelation $A_v$ such that
\begin{equation*}
E(A_v[k]) = \left\{ \begin{array}{ll}
1 & \textrm{if $k=0$} \\ e^{-|k|\sigma^2\left(\frac{2\pi}{\epsilon}\right)^2 } & \textrm{if $k \neq 0$}.
\end{array}
\right.
\end{equation*}
\end{theorem}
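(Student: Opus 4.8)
The plan is to reduce the higher-dimensional computation to the scalar result of Theorem~\ref{DiscreteGaussianWaveforms}. First I would expand the inner product using~(\ref{HigherDimSeq}), writing
$$\langle v[n+k], v[n]\rangle = \frac{1}{d}\sum_{j=0}^{d-1} X[n+k+j]\,\overline{X[n+j]},$$
so that $A_v[k]$ is a double average --- over the index $n$ and over the coordinate index $j$ --- of the scalar products $X[n+k+j]\overline{X[n+j]}$. The case $k=0$ is then immediate, since each $v[m]$ is unit-normed, giving $\langle v[n],v[n]\rangle=1$ for all $n$, whence $A_v[0]=1$ and $E(A_v[0])=1$.

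For $k\neq 0$ I would first interchange expectation with the limit and the finite sums. Since $|\langle v[n+k],v[n]\rangle|\le\|v[n+k]\|\,\|v[n]\|=1$ by Cauchy--Schwarz, the partial averages are dominated by the integrable constant $1$, so the Dominated Convergence Theorem applies just as in Theorem~\ref{DiscreteGaussianWaveforms}, yielding
$$E(A_v[k]) = \lim_{N\to\infty}\frac{1}{2N+1}\sum_{n=-N}^N \frac{1}{d}\sum_{j=0}^{d-1} E\bigl(X[n+k+j]\,\overline{X[n+j]}\bigr).$$
The key observation is that each summand is exactly a one-dimensional expected product at lag $k$, based at the shifted index $p=n+j$: the exponent of $X[p+k]\overline{X[p]}$ collapses to $\tfrac{2\pi i}{\epsilon}$ times a sum of $2|k|$ distinct independent $Y_\ell$'s, with expectation $[\phi_{Y_1}(2\pi/\epsilon)]^{2|k|}=e^{-|k|\sigma^2(2\pi/\epsilon)^2}$, a value independent of the base index $p$ and hence of both $n$ and $j$.

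With every summand equal to this same constant, the inner average over $j$ contributes a factor $\tfrac{1}{d}\cdot d=1$ and the outer average over $n$ leaves the constant unchanged, so $E(A_v[k])=e^{-|k|\sigma^2(2\pi/\epsilon)^2}$; the case $k<0$ is identical with $|k|$ in place of $k$. The only point needing care --- more bookkeeping than genuine obstacle --- is the translation invariance in $p$ of the scalar expectation $E(X[p+k]\overline{X[p]})$, since it is precisely this that makes the $d$ coordinate contributions identical and lets the higher-dimensional autocorrelation inherit the one-dimensional decay rate verbatim.
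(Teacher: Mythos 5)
Your proposal is correct and follows essentially the same route as the paper: expand the inner product into $\frac{1}{d}\sum_{j=0}^{d-1}X[n+k+j]\overline{X[n+j]}$, justify the interchange of expectation with the limit via dominated convergence (the paper likewise bounds the averages by $1$), and observe that each scalar expectation at lag $k$ equals $\left[\phi_{Y_1}\left(\tfrac{2\pi}{\epsilon}\right)\right]^{2|k|}$ independently of the base index, so the double average is this constant. Nothing further is needed.
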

\begin{proof}
As defined in (\ref{HigherDimAutocor}),
$$A_v[k] = \lim_{N \to \infty} \frac{1}{2N+1}\sum_{n = -N}^N \langle v[n + k], v[n]\rangle.$$
When $k =0,$
$$
A_v[0] = \lim_{N \to \infty} \frac{1}{2N+1}\sum_{n = -N}^N \|v[n]\|^2 = 1.$$
Thus, $$E(A_v[0]) = 1.$$
For $k \neq 0,$ due to (\ref{HigherDimSeq}),
\begin{eqnarray*}
\langle v[n+k], v[n] \rangle &=& \frac 1d \left\langle \left[\begin{array}{c}
X[n+k]\\
X[n+k+1] \\
\vdots \\
X[n + k + d-1]
\end{array}\right], \left[\begin{array}{c}
X[n]\\
X[n+1] \\
\vdots \\
X[n + d-1]
\end{array}\right]\right\rangle \\
&=& \frac 1d \Big(X[n + k]\overline{X[n]} + X[n + k +1]\overline{X[n+1]} + \ldots + \\
&& \phantom{++++++++++} X[n + k+d-1]\overline{X[n+d-1]}\Big).
\end{eqnarray*}
Consider $k > 0.$
\begin{eqnarray*}
E(A_v[k]) &=& \lim_{N \to \infty} \frac{1}{2N+1}\sum_{n = -N}^N E(\langle v[n + k], v[n]\rangle) \\
&=& \lim_{N \to \infty} \frac{1}{2N+1}\sum_{n = -N}^N E\left(\frac 1d \sum_{m=0}^{d-1} X[n + k + m]\overline{X[n+m]}\right) \\
&=& \lim_{N \to \infty} \frac{1}{2N+1}\sum_{n = -N}^N \frac 1d \sum_{m=0}^{d-1} E\left(X[n + k + m]\overline{X[n+m]}\right) \\
&=& \lim_{N \to \infty} \frac{1}{2N+1}\sum_{n = -N}^N \frac 1d \sum_{m=0}^{d-1} E\left(e^{\frac{2\pi}{\epsilon}i(Y_{-(n+m+k)} + \ldots + Y_{-(n+m+1)}+Y_{n+m+1} + \ldots  + Y_{n+m+k})}\right) \\
&=& \lim_{N \to \infty} \frac{1}{2N+1}\sum_{n = -N}^N \frac 1d \sum_{m=0}^{d-1} E(e^{\frac{2\pi}{\epsilon}iY_1})^{2k} = e^{-\sigma^2k(\frac{2\pi}{\epsilon})^2}.
\end{eqnarray*}
Similarly, for $k < 0$ one gets
$$E(A_v[k]) =e^{\sigma^2k(\frac{2\pi}{\epsilon})^2}.$$
\end{proof}

\noindent
Thus the waveform $v$ as defined in this section is unit-normed and has autocorrelation that can be made arbitrarily small.
\begin{remark} \rm
As in the one dimensional construction, it is easy to see that here too the construction can be done with random variables other than the Gaussian. In fact, all random variables that can be used in the one dimensional case, i.e., ones satisfying the properties of Theorem \ref{GeneralizationGaussian}, can also be used for the higher dimensional construction.
\end{remark}
\subsection{Remark on the periodic case}
It can be shown that the periodic case follows the same nature as the aperiodic case. The sequence $X: \mathbb{Z}_n \to \mathbb{C}$ is defined in the same way as in Section \ref{DiscreteGaussianWaveforms}, i.e.,
$$\forall m \in \{0, 1, \ldots, n-1\}, \quad X[m] = e^{\frac{2\pi}{\epsilon}i\sum_{\ell = -m}^m Y_{\ell}}$$
where $Y_{\ell} \sim N(0, \sigma^2).$
Following the definition given in (\ref{periodicAutocorrelation}), when $k = 0,$ $$A_X[0] = \frac 1n \sum_{m=0}^{n-1} X[m]\overline{X[m]} = 1.$$
When $k \neq 0, $ the expectation of the autocorrelation is
$$
E(A_X[k]) = \frac 1n \sum_{m=0}^{n-1}E(X[m+k]\overline{X[m]})$$
For $k > 0,$
\begin{eqnarray*}
E(A_X[k])
&=& \frac{1}{n} \sum_{m = 0}^{n-1} E(e^{\frac{2\pi i}{\epsilon}\sum_{\ell = -m-k}^{m+k} Y_{\ell}}e^{-\frac{2\pi i}{\epsilon}\sum_{j = - m}^{m} Y_{j}}) \\
&=& \frac{1}{n} \sum_{m = 0}^{n-1} E(e^{\frac{2\pi i}{\epsilon}(\sum_{\ell = m+1}^{m+k} Y_{\ell} + \sum_{j = -m-k}^{-m-1} Y_{j})}) \\
&=& \frac{1}{n} \sum_{m = 0}^{n-1} \left[E\left(e^{\frac{2\pi i }{\epsilon}Y_1} \right)\right]^{2k} \\ &=& \left[ E\left(e^{\frac{2\pi i }{\epsilon}Y_1} \right) \right]^{2k}= \left[\phi_{Y_1}\left(\frac{2\pi}{\epsilon}\right)\right]^{2k} = e^{- k \sigma^2 \left(\frac{2\pi}{\epsilon}\right)^2 }
\end{eqnarray*}
where one uses the fact that the $Y_{\ell}$s are i.i.d.. A similar calculation for negative values of $k$ suggests that the autocorrelation can be made arbitrarily small, depending on $\epsilon,$ for all non-zero values of $k.$ Also, as in the aperiodic case, this result can be obtained for random variables other than the Gaussian.
\section{Construction of continuous stochastic waveforms}\label{stochasticCAZAC}
 %\subsection{Continuous Case}
In this section continuous waveforms with almost perfect autocorrelation are constructed from a one dimensional Brownian motion.

For a continuous waveform $x: \mathbb{R} \to \mathbb{C},$ the autocorrelation $A_x:\mathbb{R} \to \mathbb{C}$ can be defined as
\begin{equation}
A_x(s) = \lim_{T \to \infty} \frac{1}{2T} \int_{-T}^T x(t + s) \overline{x(t)} \ \textrm{dt}.
\end{equation}
Let $\{W(t); \ t > 0\}$ be a one dimensional Brownian motion. Then $W(t)$ satisfies
\begin{itemize}
\item[(i)] $W(0) = 0$
\item[(ii)] $W(t+s) - W(s) \sim N(0, \sigma^2 t)$
\item[(iii)] $0< t_1 < \cdots < t_k,$
$ \ W(t_{i+1}) - W(t_i)$ are independent random variables.
\end{itemize}
\begin{theorem}\label{ContinuousWaveform}
Let $W(t)$ be the one dimensional Brownian motion and $\epsilon > 0$ be given.  Define $x : \mathbb{R} \to \mathbb{C}$ by
\begin{equation*}
x(t) = e^{\frac{2 \pi}{\epsilon}i W(t)} \quad \textrm{for $t \geq 0, $}
\end{equation*}
and $x(-t) = x(t).$ Then the autocorrelation of $x,$ $A_x,$ satisfies
\begin{equation*}
E(A_x(s)) = \left\{ \begin{array}{ll}
1 & \textrm{if $s=0$} \\ e^{-\frac{\sigma^2}{2}|s|\left(\frac{2\pi}{\epsilon}\right)^2} & \textrm{if $s \neq 0$}.
\end{array}
\right.
\end{equation*}
\end{theorem}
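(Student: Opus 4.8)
The plan is to mirror the discrete argument of Theorem~\ref{DiscreteGaussianWaveforms}, replacing the i.i.d.\ increments appearing in the exponent by the stationary, independent increments of Brownian motion. First, the case $s=0$ is immediate: for every $t$ one has $x(t)\overline{x(t)}=|x(t)|^2=1$, so $A_x(0)=1$ and hence $E(A_x(0))=1$. For $s\neq 0$ I would first justify interchanging the expectation with the limit and the integral defining $A_x$. Since $|x(t+s)\overline{x(t)}|\le 1$, the integrand is dominated by the constant $1$, so the Dominated Convergence Theorem gives
$$E(A_x(s)) = \lim_{T\to\infty}\frac{1}{2T}\int_{-T}^{T} E\bigl(x(t+s)\overline{x(t)}\bigr)\,dt.$$

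The heart of the argument is to evaluate $E(x(t+s)\overline{x(t)})$. Assume $s>0$ (the case $s<0$ follows by an identical computation, or by noting the answer must be real and even in $s$). For $t\ge 0$ one has $x(t+s)\overline{x(t)}=\exp\bigl(\tfrac{2\pi}{\epsilon}i(W(t+s)-W(t))\bigr)$, and by property~(ii) the increment $W(t+s)-W(t)\sim N(0,\sigma^2 s)$. Hence this expectation equals the characteristic function of an $N(0,\sigma^2 s)$ variable evaluated at $2\pi/\epsilon$, namely $e^{-\frac{\sigma^2 s}{2}(2\pi/\epsilon)^2}$, which is \emph{independent of} $t$. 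This is the continuous analogue of the cancellation of overlapping $Y_\ell$'s in the discrete proof: the expected integrand collapses to a single characteristic-function value.

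The one genuinely new feature, and the step I expect to require the most care, is the even extension $x(-t)=x(t)$, which forces a split of $\int_{-T}^{T}$ into three regions according to the signs of $t$ and $t+s$: the region $t\ge 0$, the reflected region $t<-s$, and the transitional strip $-s\le t<0$. On the reflected region both factors come from the even extension, so after the substitution $u=-t$ the integrand becomes $\exp\bigl(\tfrac{2\pi}{\epsilon}i(W(u-s)-W(u))\bigr)$; since $W(u)-W(u-s)\sim N(0,\sigma^2 s)$ and the normal law is symmetric, its expectation is again $e^{-\frac{\sigma^2 s}{2}(2\pi/\epsilon)^2}$. Thus the expected integrand is the \emph{same} constant on both outer regions, whose lengths sum to $2T-s$. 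On the strip $-s\le t<0$ the integrand is bounded by $1$ while the strip has fixed length $s$, so its contribution is $O(1/T)$ and vanishes in the limit. Collecting these, the prefactor $\tfrac{1}{2T}$ multiplies a total mass $(2T-s)\,e^{-\frac{\sigma^2 s}{2}(2\pi/\epsilon)^2}+O(1)$, and letting $T\to\infty$ yields $E(A_x(s))=e^{-\frac{\sigma^2 s}{2}(2\pi/\epsilon)^2}$; writing $|s|$ to absorb the $s<0$ case then completes the proof. The main obstacle is thus purely the bookkeeping around the even extension; once one verifies that reflection preserves the relevant increment law, everything reduces to the fact that the characteristic function of the $N(0,\sigma^2|s|)$ increment does not depend on $t$.
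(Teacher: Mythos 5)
Your proposal is correct and follows essentially the same route as the paper: dominated convergence to pass the expectation inside, followed by the observation that $E\bigl(x(t+s)\overline{x(t)}\bigr)$ equals the characteristic function of the $N(0,\sigma^2|s|)$ increment at $2\pi/\epsilon$, independently of $t$. Your explicit three-region splitting to handle the even extension $x(-t)=x(t)$ is actually more careful than the paper's proof, which integrates $E\bigl(e^{\frac{2\pi}{\epsilon}i(W(t+s)-W(t))}\bigr)$ over all of $[-T,T]$ without separately treating negative $t$; the conclusion is unaffected.
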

\begin{proof}
 We would like to evaluate
$$E(A_x(s)) = E\left(\lim_{T \to \infty} \frac{1}{2T} \int_{-T}^T x(t+s)\overline{x(t)} \ \textrm{dt} \right).$$
Let $s > 0$ and let $g_T(s) = \frac{1}{2T} \int_{-T}^T x(t+s)\overline{x(t)} \ \textrm{dt}.$
\begin{eqnarray*}
E(g_T) &=& \frac{1}{2T} \int_{-T}^T E(x(t+s)\overline{x(t)}) \ \textrm{dt} \\
&=& \frac{1}{2T} \int_{-T}^T E\left(e^{\frac{2\pi  }{\epsilon}i(W(t+s)-W(t))}\right) \ \textrm{dt} \\
&=& \frac{1}{2T}\int_{-T}^T \phi_{W(t+s)-W(t)}\left(\frac{2\pi}{\epsilon}\right) = e^{-\frac{\sigma^2}{2}s\left(\frac{2\pi}{\epsilon}\right)^2} < \infty.
\end{eqnarray*}
Thus each $g_T$ is integrable and further $|g_T| \leq 1.$ Let $h(t) = 1; \ t \in \mathbb{R}.$ Then $E(h) = 1.$ Therefore, by the Dominated Convergence Theorem, and properties of Brownian motion and characteristic functions, one gets
%Assuming the interchange of limits and integral as being valid
\begin{eqnarray*}
E(A_x(s)) &=& E\left(\lim_{T \to \infty} \frac{1}{2T} \int_{-T}^T x(t+s)\overline{x(t)} \ \textrm{dt} \right) \\
&=& \lim_{T \to \infty} \frac{1}{2T} \int_{-T}^T E\left(e^{\frac{2 \pi }{\epsilon}i (W(t+s) - W(t))}\right)  \ \textrm{dt}  \\
&=& \lim_{T \to \infty}\frac{1}{2T}\int_{-T}^T \phi_{W(t+s)-W(t)}\left(\frac{2\pi}{\epsilon}\right) = e^{-\frac{\sigma^2}{2}s\left(\frac{2\pi}{\epsilon}\right)^2}
\end{eqnarray*}
 %The integrand is just a characteristic function and properties of Brownian motion and characteristic functions
%\begin{eqnarray*}
%E[A(k)] &=& \lim_{N \to \infty} \frac{1}{2N} \int_{-N}^N \phi_{\frac{W(t+k) - W(t)}{\epsilon}}(2\pi) = e^{-\frac{2\pi^2 k}{\epsilon^2}}
%\end{eqnarray*}
which can be made arbitrarily small based on $\epsilon.$ Similarly, $$E(A(-s)) = \lim_{T \to \infty}\frac{1}{2T}\int_{-T}^T \phi_{W(t)-W(t-s)}\left(-\frac{2\pi}{\epsilon}\right) = e^{-\frac{\sigma^2}{2}s\left(-\frac{2\pi}{\epsilon}\right)^2}  = e^{-\frac{\sigma^2}{2}s\left(\frac{2\pi}{\epsilon}\right)^2}.$$ \end{proof}
\section{Connection to frames}\label{StochasticFrames}
Consider the mapping $v: \mathbb{Z} \to \mathbb{C}^d$ given by
\begin{equation}\label{stochastic_frame}
v(k) = \frac{1}{\sqrt{d}}\left(\begin{array}{c}
X[k] \\
X[k+1] \\
\vdots \\
X[k+d-1]
\end{array}\right)
\end{equation}
where $X[k] = e^{\frac{2\pi}{\epsilon}i \sum_{\ell = -k}^{k} Y_{\ell}},$ as defined in Section \ref{discrete_waveforms_Gaussian}.

Let $M \geq d$ and consider the set $V = \{v(1), v(2), \ldots, v(M)\}$ of $M$ unit vectors in $\mathbb{C}^d.$ The matrix
\begin{equation*}%\label{BesselMapMatrix}
F = \frac{1}{\sqrt{d}}\left[
\begin{array}{cccc}
\overline{X[1]} & \overline{X[2]} & \cdots &  \overline{X[d]} \\
\overline{X[2]} & \overline{X[3]} & \cdots &  \overline{X[d+1]} \\
\vdots & \vdots & \cdots &  \vdots\\
\overline{X[M]} & \overline{X[M+1]} & \cdots &  \overline{X[M+d-1]} \\
\end{array}
\right].
\end{equation*}
 is the matrix of the analysis operator corresponding to $V.$
The frame operator of $V$ is $\mathcal{F} = F^*F,$ i.e.,
\begin{equation*}%\label{frameoperatormatrix}
\mathcal{F} = \frac{1}{d}
\left[
\begin{array}{cccc}
X[1] & X[2] & \cdots & X[M] \\
X[2] & X[3] & \cdots  & X[M+1] \\
\vdots & \vdots & \cdots & \vdots \\
X[d] & X[d+1] & \cdots & X[M+d-1] \\
\end{array}
\right] \left[
\begin{array}{cccc}
\overline{X[1]} & \overline{X[2]} & \cdots &  \overline{X[d]} \\
\overline{X[2]} & \overline{X[3]} & \cdots &  \overline{X[d+1]} \\
\vdots & \vdots & \cdots &  \vdots\\
\overline{X[M]} & \overline{X[M+1]} & \cdots &  \overline{X[M+d-1]} \\
\end{array}
\right].
\end{equation*}
The entries of $\mathcal{F}$ are given by $\mathcal{F}_{m,m} = \frac Md$ and for $m \neq n, \ m> n,$
\begin{eqnarray*}
 \mathcal{F}_{m,n} &=& \frac 1d\left( \begin{array}{cccc}
X[m] & X[m+1] & \cdots & X[M+m-1]
\end{array}
\right)
\left( \begin{array}{c}
\overline{X[n]} \\
\overline{X[n+1]}  \\
\vdots \\
 \overline{X[M+n-1]}
\end{array}
\right) \\
%&=& X[m]\overline{X[n]} + X[m+1]\overline{X[n+1]} + \cdots + X[M+m-1]\overline{X[M+n-1]} \\
&=& \frac Md \frac 1M \sum_{\ell=0}^{M-1}X[m+\ell]\overline{X[n+\ell]} \\
%&=& \frac Md \frac{1}{M}\sum_{\ell=0}^{M-1} e^{\frac{2\pi}{\epsilon}i\left(Y_{-m - \ell} + \cdots + Y_{-n- \ell - 1} +  Y_{n + \ell + 1} + \cdots + Y_{m + \ell}\right)} \\
&=& \frac Md\left(\frac{1}{M}\sum_{\ell=0}^{M-1} e^{\frac{2\pi}{\epsilon}i\left(Y_{-m - \ell} + \cdots + Y_{-n- \ell - 1} +  Y_{n + \ell + 1} + \cdots + Y_{m + \ell}\right)}\right).
\end{eqnarray*}
Note that since $\mathcal{F}$ is self-adjoint, $\mathcal{F}_{m,n} = \overline{\mathcal{F}}_{n, m}.$
It is desired that $V$ emulates a tight frame, i.e, $\mathcal{F}$ is close to a constant times the identity, in this case, $\frac Md$ times the identity. Alternatively, it is desirable that the eigenvalues of $\mathcal{F}$ are all close to each other and close to $\frac Md.$ In this case, due to the stochastic nature of the frame operator, one studies the expectation of the eigenvalues of $\mathcal{F}.$
%
%\begin{equation}
%{\bm\mu}, \mu
%\end{equation}
\subsection{Frames in $\bm{\mathbb{C}^2}$}
This section discusses the construction of sets of vectors in $\mathbb{C}^2$ as given by (\ref{stochastic_frame}). The frame properties of such sets are analyzed. In fact, it is shown that the expectation of the eigenvalues of the frame operator are close to each other, the closeness increasing with the size of the set. The bounds on the probability of deviation of the eigenvalues from the expected value is also derived. The related inequalities arise from an application of Theorem \ref{AzumaInequality} \cite{Hoeffding} below.
\begin{theorem}[Azuma's Inequality \cite{Hoeffding}]\label{AzumaInequality}
Suppose that $\{X_k \ : \ k = 0, 1,2, \ldots\}$ is a martingale and
$$|X_k - X_{k-1}| < c_k,$$ almost surely. Then for all positive integers $N$ and all positive reals $t,$
$$P(|X_N - X_0|\geq t) \leq 2e^{\left(\frac{-t^2}{2\sum_{k=1}^N c_k^2}\right)}.$$
\end{theorem}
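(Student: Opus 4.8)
The plan is to prove this classical tail bound by the exponential-moment (Chernoff) method adapted to martingales. First I would pass to the martingale differences $D_k = X_k - X_{k-1}$, so that $X_N - X_0 = \sum_{k=1}^N D_k$. Writing $\mathcal{G}_k$ for the $\sigma$-algebra generated by $X_0,\dots,X_k$, the defining property of a martingale gives $E(D_k \mid \mathcal{G}_{k-1}) = 0$, while the hypothesis supplies the almost-sure bound $|D_k| < c_k$. For any fixed $\lambda > 0$, Markov's inequality applied to the nonnegative random variable $e^{\lambda(X_N - X_0)}$ yields
\[ P(X_N - X_0 \geq t) \leq e^{-\lambda t}\, E\big(e^{\lambda(X_N - X_0)}\big), \]
so the whole problem reduces to controlling the moment generating function of the sum of increments.

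The main technical step, and the one I expect to be the crux, is a conditional bound on the moment generating function of a single bounded, mean-zero increment: if $Y$ satisfies $E(Y\mid\mathcal{G})=0$ and $|Y|\leq c$, then $E(e^{\lambda Y}\mid\mathcal{G})\leq e^{\lambda^2 c^2/2}$. I would prove this by convexity of $y \mapsto e^{\lambda y}$ on $[-c,c]$, which gives the chord bound $e^{\lambda y} \leq \frac{c-y}{2c}e^{-\lambda c} + \frac{c+y}{2c}e^{\lambda c}$; since the right-hand side is affine in $y$, taking the conditional expectation and using $E(Y\mid\mathcal{G})=0$ collapses it to $E(e^{\lambda Y}\mid\mathcal{G}) \leq \cosh(\lambda c)$, after which the elementary inequality $\cosh(x)\leq e^{x^2/2}$ (immediate from comparing Taylor coefficients) finishes the lemma. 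With this in hand I would peel off the increments one at a time using the tower property: conditioning on $\mathcal{G}_{N-1}$,
\[ E\big(e^{\lambda\sum_{k=1}^N D_k}\big) = E\Big(e^{\lambda\sum_{k=1}^{N-1} D_k}\,E(e^{\lambda D_N}\mid \mathcal{G}_{N-1})\Big) \leq e^{\lambda^2 c_N^2/2}\,E\big(e^{\lambda\sum_{k=1}^{N-1} D_k}\big), \]
and iterating down to $k=1$ produces $E\big(e^{\lambda(X_N - X_0)}\big) \leq e^{\frac{\lambda^2}{2}\sum_{k=1}^N c_k^2}$.

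Combining this with the Markov bound gives $P(X_N - X_0 \geq t) \leq \exp\big(-\lambda t + \tfrac{\lambda^2}{2}\sum_{k=1}^N c_k^2\big)$ for every $\lambda>0$. I would then minimize the exponent in $\lambda$; the minimizer is $\lambda = t/\sum_{k=1}^N c_k^2$, and substituting it yields the one-sided estimate $P(X_N - X_0 \geq t) \leq e^{-t^2/(2\sum_{k=1}^N c_k^2)}$. Finally, applying the identical argument to the martingale $\{-X_k\}$ gives the matching lower-tail bound $P(X_N - X_0 \leq -t) \leq e^{-t^2/(2\sum_{k=1}^N c_k^2)}$, and a union bound over the two tails supplies the factor of $2$, completing the proof. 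Apart from the conditional moment generating function estimate, every step is routine; the one subtlety to watch is that this estimate must be genuinely conditional, so that the tower-property iteration remains valid even though the martingale differences need not be independent.
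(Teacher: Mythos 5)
Your argument is the standard and correct proof of Azuma--Hoeffding: the conditional Hoeffding lemma via the convexity chord bound and $\cosh(x)\leq e^{x^2/2}$, the tower-property iteration over the martingale differences, Markov's inequality with the optimal choice $\lambda = t/\sum_{k=1}^N c_k^2$, and the two-sided bound from applying the one-sided estimate to $\{-X_k\}$. Note that the paper does not prove this theorem at all --- it is quoted as a known result with a citation to Hoeffding and used as a black box in the proof of Theorem \ref{EValuesFrame}(b) --- so there is no in-paper proof to compare against; your writeup would serve as a correct, self-contained justification, and the one subtlety you flag (that the moment generating function bound must be conditional on $\mathcal{G}_{k-1}$ for the peeling to work without independence) is exactly the right thing to be careful about.
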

Consider  $M \geq 3$ vectors in $\mathbb{C}^2,$ i.e., $d = 2$ in (\ref{stochastic_frame}).
Then $v : \mathbb{Z} \to \mathbb{C}^2$ and
\begin{equation}\label{frameC2}
v(1) = \frac{1}{\sqrt{2}}\left(\begin{array}{c}
X[1] \\
X[2]
\end{array}\right), \ v(2) = \frac{1}{\sqrt{2}}\left(\begin{array}{c}
X[2] \\
X[3]
\end{array}\right), \ldots,  v(M) = \frac{1}{\sqrt{2}}\left(\begin{array}{c}
X[M] \\
X[M+1]
\end{array}\right).
\end{equation}
Considering the set $V = \{v(1), v(2), \ldots, v(M)\},$ the frame operator of $V$ is
\begin{eqnarray}
\mathcal{F} &=& \frac 12 \left[\begin{array} {ccccc}
X[1] & X[2] & X[3] & \cdots & X[M]\\
X[2] & X[3] & X[4] & \cdots & X[M+1]
\end{array}\right]\left[\begin{array} {cc}
\overline{X[1]} & \overline{X[2]}  \\
\overline{X[2]} & \overline{X[3]} \\
\overline{ X[3]} & \overline{X[4]} \\
\vdots & \vdots \\
\overline{ X[M]} & \overline{X[M+1]}
\end{array}\right] \nonumber \\
\textrm{or, } \quad \mathcal{F}&=& \frac M2 \left[
\begin{array}{cc}
1 & \frac 1M \sum_{m = 1}^M X[m] \overline{X[m+1]} \\
\frac 1M \sum_{m = 1}^M \overline{X[m]} X[m+1] & 1
\end{array}
\right]. \label{frameoperatorC2}
\end{eqnarray}
%\end{example}
%
\begin{theorem}\label{EValuesFrame}
(a) Consider the set $V = \{v(1), v(2), \ldots, v(M)\} \subseteq \mathbb{C}^2,$ $M \geq 3,$ where the vectors $v(n)$ are given by (\ref{frameC2}).
The minimum eigenvalue, $\lambda_{\min}(\mathcal{F}),$ and the maximum eigenvalue, $\lambda_{\max}(\mathcal{F}),$ of the frame operator of  $V$ satisfy
\begin{equation}\label{eigenvalueslimitsMge2}
\frac M2\left(1 - \delta \right) \leq E(\lambda_{\min}(\mathcal{F}))
\leq E(\lambda_{\max}(\mathcal{F})) \leq \frac M2\left(1 + \delta \right)
\end{equation}
where $\delta = \sqrt{\frac 1M + \frac{M-1}{M} e^{-2\sigma^2 \left(\frac{2\pi}{\epsilon}\right)^2}}.$ \\
(b) The deviation of the minimum and maximum eigenvalue of $\mathcal{F}$ from their expected value is given, for all positive reals $r,$ by
\begin{eqnarray*}
& P(|\lambda_{\min}(\mathcal{F}) - E(\lambda_{\min}(\mathcal{F}))| > r) \leq 2 e^{-\frac{4r^2}{8M^3}}, \\
& P(|\lambda_{\max}(\mathcal{F}) - E(\lambda_{\max}(\mathcal{F}))| > r) \leq 2 e^{-\frac{4r^2}{8M^3}}.
\end{eqnarray*}
\end{theorem}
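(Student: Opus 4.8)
The plan is to reduce both parts to the behaviour of the single scalar random variable $|S|$, where $S = \sum_{m=1}^{M} X[m]\overline{X[m+1]}$ is $M$ times the off-diagonal entry appearing in (\ref{frameoperatorC2}). A $2\times 2$ Hermitian matrix with equal diagonal entries $a$ and off-diagonal entry $b$ has eigenvalues $a \pm |b|$; applying this to $\mathcal{F}$, whose diagonal entries equal $\frac{M}{2}$ and whose off-diagonal entry equals $\frac{1}{2}S$, gives the explicit formulas $\lambda_{\max}(\mathcal{F}) = \frac{M}{2} + \frac{1}{2}|S|$ and $\lambda_{\min}(\mathcal{F}) = \frac{M}{2} - \frac{1}{2}|S|$. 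Since $|S| \le M$, both eigenvalues lie in $[0,M]$ and satisfy $\lambda_{\min}(\mathcal{F}) \le \lambda_{\max}(\mathcal{F})$ pointwise, so each part of the theorem becomes a statement about $|S|$.

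For part (a) I would take expectations, obtaining $E(\lambda_{\max}(\mathcal{F})) = \frac{M}{2} + \frac{1}{2}E(|S|)$ and $E(\lambda_{\min}(\mathcal{F})) = \frac{M}{2} - \frac{1}{2}E(|S|)$; the claimed inequalities, together with the middle inequality coming from the pointwise bound, then follow from the single estimate $E(|S|) \le M\delta$. I would get this from Jensen's inequality, $E(|S|) \le \sqrt{E(|S|^2)}$, after computing $E(|S|^2)$ exactly. The key simplification is that, by the same cancellation as in Theorem \ref{DiscreteGaussianWaveforms}, each factor $W_m := X[m]\overline{X[m+1]}$ equals $e^{-\frac{2\pi}{\epsilon}i(Y_{m+1}+Y_{-(m+1)})}$ and hence depends only on the pair $(Y_{m+1},Y_{-(m+1)})$; for distinct $m$ these index pairs are disjoint, so $W_1,\dots,W_M$ are mutually independent. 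Expanding $|S|^2 = \sum_{m,n} W_m\overline{W_n}$, the $M$ diagonal terms each equal $1$, while each of the $M(M-1)$ off-diagonal terms factors as $E(W_m)E(\overline{W_n})$ and equals $\big[\phi_{Y}(\tfrac{2\pi}{\epsilon})\big]^{4} = e^{-2\sigma^2(2\pi/\epsilon)^2}$. This yields $E(|S|^2) = M + M(M-1)e^{-2\sigma^2(2\pi/\epsilon)^2} = M^2\delta^2$, whence $E(|S|) \le M\delta$.

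For part (b) I would build a Doob (bounded-difference) martingale for whichever eigenvalue $\lambda$ is under consideration. Ordering the independent blocks $W_1,\dots,W_M$, set $X_0 = E(\lambda)$ and $X_k = E(\lambda \mid W_1,\dots,W_k)$, so that $X_M = \lambda$ and $\{X_k\}$ is a martingale with respect to the filtration generated by the blocks. Since each $X_k$ is a conditional expectation of a quantity lying in $[0,M]$, every $X_k$ also lies in $[0,M]$, giving the almost-sure bound $|X_k - X_{k-1}| \le M =: c_k$. Applying Theorem \ref{AzumaInequality} with $N = M$ and $\sum_{k=1}^{M} c_k^2 = M^3$ gives $P(|\lambda - E(\lambda)| \ge r) \le 2e^{-r^2/(2M^3)} = 2e^{-4r^2/(8M^3)}$, exactly the stated bound, and the argument is identical for $\lambda_{\min}$ and $\lambda_{\max}$.

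The routine content is the eigenvalue formula and the variance computation in part (a), which both go through cleanly once the block-independence of the $W_m$ is recorded. The step to watch is the martingale argument in part (b): the eigenvalues are a non-linear (modulus) function of the random phases, so one cannot use linearity directly, and the honest control comes either from the $1$-Lipschitz dependence of $|S|$ on each $W_m$ or, as above, from the crude range bound $c_k = M$, which already suffices to reproduce the stated constant. The main thing to verify carefully is therefore the disjointness of the index pairs $\{m+1,\,-(m+1)\}$, since this underlies both the independence used in part (a) and the validity of the block martingale in part (b).
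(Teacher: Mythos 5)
Your proposal is correct and follows essentially the same route as the paper: both reduce the eigenvalues to $\frac{M}{2} \pm \frac{1}{2}|S|$ (the paper works with $\alpha = S/M$), bound $E(|S|)$ via Jensen after an exact second-moment computation that exploits the independence of the blocks $X[m]\overline{X[m+1]} = e^{-\frac{2\pi}{\epsilon}i(Y_{m+1}+Y_{-(m+1)})}$, and prove part (b) with a Doob martingale over these blocks together with Azuma's inequality. The only cosmetic difference is that you apply Azuma directly to the eigenvalue with increments bounded by $M$, whereas the paper applies it to $|\alpha|$ with increments bounded by $2$ and then rescales by $\frac{M}{2}$; both yield the identical bound $2e^{-4r^2/(8M^3)}$.
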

\begin{proof}
(a)
%In general, if there are $M \geq 3$ unit vectors in $\mathbb{C}^2$ then
The frame operator of $V=\{v(1), v(2), \ldots, v(M)\}$ is
given in (\ref{frameoperatorC2}).
%\begin{eqnarray*}
%\mathcal{F} &=& \frac 12 \left[\begin{array} {ccccc}
%X[1] & X[2] & X[3] & \cdots & X[M]\\
%X[2] & X[3] & X[4] & \cdots & X[M+1]
%\end{array}\right]\left[\begin{array} {cc}
%\overline{X[1]} & \overline{X[2]}  \\
%\overline{X[2]} & \overline{X[3]} \\
%\overline{ X[3]} & \overline{X[4]} \\
%\vdots & \vdots \\
%\overline{ X[M]} & \overline{X[M+1]}
%\end{array}\right] \\
%\frac 2M \mathcal{F}&=&\left[
%\begin{array}{cc}
%1 & \frac 1M \sum_{m = 1}^M X[m] \overline{X[m+1]} \\
%\frac 1M \sum_{m = 1}^M \overline{X[m]} X[m+1] & 1
%\end{array}
%\right].
%\end{eqnarray*}
The eigenvalues of $\frac 2M \mathcal{F}$ are $\lambda_{1} = 1 - |\alpha|$ and $\lambda_{2} = 1 + |\alpha|$ where
$$\alpha = \frac 1M \sum_{m = 1}^M X[m] \overline{X[m+1]}.$$ Let
\begin{eqnarray}
\gamma_1 &=& X[1]\overline{X[2]} = e^{-\frac{2\pi}{\epsilon}i(Y_{-2} + Y_2)}, \nonumber\\
\gamma_2 &=& X[2]\overline{X[3]} = e^{-\frac{2\pi}{\epsilon}i(Y_{-3} + Y_3)}, \nonumber \\
\vdots && \nonumber \\
\gamma_M &=& X[M]\overline{X[M+1]} =  e^{-\frac{2\pi}{\epsilon}i(Y_{-(M+1)} + Y_{(M+1)})}, \nonumber
\end{eqnarray}
so that $$\alpha = \frac{\gamma_1 + \gamma_2 + \cdots + \gamma_M}{M}.$$
Note that for $m \neq n,$ $\gamma_m$ and $\gamma_n$ are independent and so
$E(\gamma_m \overline{\gamma_n}) = E(\gamma_m)E(\overline{\gamma_n}).$
Also, since the $Y_{\ell}$s are i.i.d. and the characteristic function of the $Y_{\ell}$s is symmetric,  $$\forall \ 1 \leq m \leq M, \quad E(\gamma_m) = E(e^{\frac{2\pi}{\epsilon}i(Y_{-(m+1)} + Y_{m+1})}) = \{E(e^{\frac{2\pi}{\epsilon}iY_1})\}^2 = e^{-\sigma^2(\frac{2\pi}{\epsilon})^2} = E(\overline{\gamma}_m)$$ and therefore
$$E(\gamma_{m}\overline{\gamma_n}) = e^{-2\sigma^2(\frac{2\pi}{\epsilon})^2}.$$
Thus
\begin{eqnarray*}
E(|\alpha|^2) &=& E(\alpha\overline{\alpha}) = \frac{1}{M^2} E((\gamma_1 + \gamma_2 + \cdots + \gamma_M)(\overline{\gamma_1} + \overline{\gamma_2} + \cdots + \overline{\gamma_M})) \\
&=& \frac{1}{M^2}E(|\gamma_1|^2 + |\gamma_2|^2 + \cdots + |\gamma_M|^2 + \sum_{m \neq n}\gamma_{m}\overline{\gamma_n}) \\
&=& \frac 1M + \frac{1}{M^2} E\left(\sum_{m \neq n}\gamma_{m}\overline{\gamma_n}\right)\\
&=& \frac 1M + \frac{1}{M^2} \sum_{m \neq n} E(\gamma_{m}\overline{\gamma_n}) \\
&=& \frac1M + \frac{M-1}{M} e^{-2\sigma^2(\frac{2\pi}{\epsilon})^2}.
\end{eqnarray*}
The above estimate on $E(|\alpha|^2)$ implies that
\begin{equation}\label{estimate_of_alpha}
E(|\alpha|) \leq \sqrt{E(|\alpha|^2)} = \sqrt{\frac1M + \frac{M-1}{M}e^{-2\sigma^2(\frac{2\pi}{\epsilon})^2}}.
\end{equation}
Since $E(\lambda_1) = 1 - E(|\alpha|)$ and $E(\lambda_2) = 1 + E(|\alpha|),$ (\ref{estimate_of_alpha}) implies
%and this gives, as in the $M=2$ case,
\begin{equation*}%\label{eigenvalueslimits}
1 - \sqrt{\frac 1M + \frac{M-1}{M} e^{-2\sigma^2 \left(\frac{2\pi}{\epsilon}\right)^2}} \leq E(\lambda_{1})
\leq E(\lambda_{2}) \leq 1 + \sqrt{\frac 1M + \frac{M-1}{M} e^{-2\sigma^2 \left(\frac{2\pi}{\epsilon}\right)^2}}.
\end{equation*}
Noting that $\lambda_{\min}(\mathcal{F}) = \frac M2 \lambda_1$ and $\lambda_{\max}(\mathcal{F}) = \frac M2 \lambda_2,$ one finally gets, after
setting $\delta = \sqrt{\frac 1M + \frac{M-1}{M} e^{-2\sigma^2 \left(\frac{2\pi}{\epsilon}\right)^2}},$
%in the case of the matrix $\mathcal{F},$ these estimates become
\begin{equation*}%\label{eigenvalueslimits}
\frac M2\left(1 - \delta \right) \leq E(\lambda_{\min}(\mathcal{F}))
\leq E(\lambda_{\max}(\mathcal{F})) \leq \frac M2\left(1 + \delta \right).
\end{equation*}
(b) To prove  (b) we use the Doob martingale and Azuma's inequality \cite{Hoeffding}. For $n = 2, \ldots, M+1,$ let $Z_{n-1} = Y_{-n} + Y_n.$ Here the Doob martingale is the sequence $\{U_0, U_1, \ldots, U_{M-1}\}$ where
$$\textrm{for $k = 1, \ldots, M-1,$} \quad U_k = E\left(\frac 1M \left|\sum_{j=1}^M e^{-\frac{2\pi}{\epsilon}i Z_j}\right| \ | Z_1, Z_2, \ldots, Z_k\right)$$ and
$$U_0 = E\left(\frac 1M \left|\sum_{j=1}^M e^{-\frac{2\pi}{\epsilon}i Z_j}\right| \right).$$
Note that $U_0 = E(|\alpha|)$ and $U_{M-1} = |\alpha|.$ Also,
$$|U_k - U_{k-1}| \leq |U_k|+|U_{k-1}| \leq 2.$$ So by Azuma's Inequality (see Theorem \ref{AzumaInequality})
$$P(\left|U_{M-1}-U_0\right|\geq r)=P(\left| |\alpha|- E(|\alpha|)\right|\geq r) \leq 2 e^{-\frac{r^2}{2\sum_{k=1}^M 2^2}} = 2 e^{-\frac{r^2}{8M}}.$$
Since $|\lambda_1 - E(\lambda_1)| = |\lambda_2 - E(\lambda_2)| = ||\alpha|- E(|\alpha|)|,$ this means
$$P(|\lambda_1 - E(\lambda_1)|>r)\leq 2 e^{-\frac{r^2}{8M}}$$ and
$$P(|\lambda_2 - E(\lambda_2)|>r)\leq 2 e^{-\frac{r^2}{8M}}.$$
%Note that these estimates are for the maximum and minimum eigenvalues of $\frac 2M \mathcal{F}.$
Going back to the actual frame operator $\mathcal{F},$ whose eigenvalues are $\frac M2 \lambda_1$ and $\frac M2 \lambda_2,$ the following estimates hold.
\begin{eqnarray*}
P(|\lambda_{\max}(\mathcal{F}) - E(\lambda_{\max}(\mathcal{F}))|>r) &=& P\left(\frac M2\left||\alpha|- E(|\alpha|)\right|>r\right) \\
&=& P\left(\left||\alpha|- E(|\alpha|)\right|>\frac 2Mr\right) \leq 2 e^{-\frac{4r^2}{8M^3}}
\end{eqnarray*}
and
\begin{eqnarray*}
P(|\lambda_{\min}(\mathcal{F}) - E(\lambda_{\min}(\mathcal{F}))|>r) &=& P\left(\frac M2\left||\alpha|- E(|\alpha|)\right|>r\right) \\
&=& P\left(\left||\alpha|- E(|\alpha|)\right|>\frac 2Mr\right) \leq 2 e^{-\frac{4r^2}{8M^3}}.
\end{eqnarray*}
\end{proof}
\begin{corollary} The eigenvalues of the frame operator considered
in Theorem \ref{EValuesFrame} satisfy, for all positive reals $r,$
\begin{eqnarray*}
&& P\left(\lambda_{\min}(\mathcal{F}) < \frac M2\left(1 - \delta\right) - r\right) \leq e^{-4r^2/8M^3}, \\
&& P\left(\lambda_{\max}(\mathcal{F}) > \frac M2\left(1 + \delta\right) + r\right) \leq e^{-4r^2/8M^3}
\end{eqnarray*}
where
$\delta = \sqrt{\frac 1M + \frac{M-1}{M} e^{-2\sigma^2 \left(\frac{2\pi}{\epsilon}\right)^2}}.$
\end{corollary}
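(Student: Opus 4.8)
The plan is to derive the corollary directly from the two parts of Theorem \ref{EValuesFrame}, combining the mean estimates of part (a) with the martingale concentration of part (b), but invoking the \emph{one-sided} form of Azuma's inequality so that the factor of $2$ appearing in part (b) is dropped.

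First I would record the two consequences of part (a) that are needed: $E(\lambda_{\min}(\mathcal{F})) \geq \frac M2(1-\delta)$ and $E(\lambda_{\max}(\mathcal{F})) \leq \frac M2(1+\delta)$. These mean bounds let me replace the fixed thresholds $\frac M2(1\pm\delta)$ by the random variables' own expectations via event inclusion. Indeed, since $\frac M2(1-\delta) \leq E(\lambda_{\min}(\mathcal{F}))$, the event $\{\lambda_{\min}(\mathcal{F}) < \frac M2(1-\delta) - r\}$ is contained in $\{\lambda_{\min}(\mathcal{F}) < E(\lambda_{\min}(\mathcal{F})) - r\}$, and likewise $\{\lambda_{\max}(\mathcal{F}) > \frac M2(1+\delta) + r\} \subseteq \{\lambda_{\max}(\mathcal{F}) > E(\lambda_{\max}(\mathcal{F})) + r\}$ because $E(\lambda_{\max}(\mathcal{F})) \leq \frac M2(1+\delta)$. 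Monotonicity of probability then reduces the task to bounding one-sided deviations of each eigenvalue from its mean.

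Next I would translate both of these one-sided events into a single statement about $|\alpha|$. Recalling from part (a) the relations $\lambda_{\min}(\mathcal{F}) = \frac M2(1-|\alpha|)$ and $\lambda_{\max}(\mathcal{F}) = \frac M2(1+|\alpha|)$, a short computation shows that both $\{\lambda_{\min}(\mathcal{F}) < E(\lambda_{\min}(\mathcal{F})) - r\}$ and $\{\lambda_{\max}(\mathcal{F}) > E(\lambda_{\max}(\mathcal{F})) + r\}$ coincide with the upper-tail event $\{|\alpha| - E(|\alpha|) > \frac{2r}{M}\}$. Thus a single tail bound for $|\alpha|$ yields both inequalities at once.

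Finally I would apply the one-sided Azuma bound to the Doob martingale $\{U_0, \ldots, U_{M-1}\}$ constructed in part (b), for which $U_{M-1} = |\alpha|$, $U_0 = E(|\alpha|)$, and $|U_k - U_{k-1}| \leq 2$. With $t = \frac{2r}{M}$ and $\sum_{k=1}^M c_k^2 = 4M$, the one-sided inequality $P(U_{M-1} - U_0 \geq t) \leq e^{-t^2/(2\sum c_k^2)}$ gives $P(|\alpha| - E(|\alpha|) > \frac{2r}{M}) \leq e^{-4r^2/8M^3}$, which is exactly the claimed bound. The only real subtlety is the factor of $2$: part (b) uses the symmetric two-sided version of Theorem \ref{AzumaInequality} and therefore carries a leading $2$, whereas here only one tail is needed, so the one-sided Azuma estimate (which halves the bound) must be used to obtain the sharper constant. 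Everything else is event inclusion together with the arithmetic already carried out in the proof of Theorem \ref{EValuesFrame}.
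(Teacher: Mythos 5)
Your proposal is correct and follows essentially the same route as the paper: part (a) supplies the event inclusions $\{\lambda_{\min}(\mathcal{F}) < \frac M2(1-\delta) - r\} \subseteq \{\lambda_{\min}(\mathcal{F}) < E(\lambda_{\min}(\mathcal{F})) - r\}$ and $\{\lambda_{\max}(\mathcal{F}) > \frac M2(1+\delta) + r\} \subseteq \{\lambda_{\max}(\mathcal{F}) > E(\lambda_{\max}(\mathcal{F})) + r\}$, and a concentration bound then finishes the argument. The one place where you are more careful than the paper is the constant: the paper simply cites part (b), but part (b) as stated bounds the \emph{two-sided} deviation by $2e^{-4r^2/8M^3}$, so quoting it verbatim would leave a leading factor of $2$ in the corollary. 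Your observation that both one-sided events reduce to the single upper-tail event $\{|\alpha| - E(|\alpha|) > \frac{2r}{M}\}$, and that the one-sided Azuma inequality applied to the Doob martingale $\{U_0,\ldots,U_{M-1}\}$ then yields the bound $e^{-4r^2/8M^3}$ without the $2$, is exactly the justification needed for the constant as stated, and it fills a small gap that the paper's own proof glosses over.
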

\begin{proof}
Due to part (a) of Theorem \ref{EValuesFrame}
$$\lambda_{\min}(\mathcal{F}) < \frac M2(1 - \delta) - r \implies \lambda_{\min}(\mathcal{F}) < E(\lambda_{\min}(\mathcal{F})) - r.$$ This implies, as a consequence of part (b) of Theorem \ref{EValuesFrame}, that
$$P\left(\lambda_{\min}(\mathcal{F}) < \frac M2(1 - \delta) - r \right) \leq P\left(\lambda_{\min}(\mathcal{F}) < E(\lambda_{\min}(\mathcal{F})) - r \right) \leq e^{-4r^2/8M^3}.$$
%where the last step is a consequence of part (b) of Theorem \ref{EValuesFrame}.
In a similar way, from part (a) of Theorem \ref{EValuesFrame}
$$\lambda_{\max}(\mathcal{F}) > \frac M2(1 + \delta) + r \implies \lambda_{\max}(\mathcal{F}) > E(\lambda_{\max}(\mathcal{F})) + r$$
which implies, as a consequence of part (b) of Theorem \ref{EValuesFrame}, that
$$P\left(\lambda_{\max}(\mathcal{F}) > \frac M2(1 + \delta) + r \right) \leq P\left(\lambda_{\max}(\mathcal{F}) > E(\lambda_{\max}(\mathcal{F})) + r \right) \leq e^{-4r^2/8M^3}.$$
\end{proof}
\begin{remark}
\rm
In Theorem \ref{EValuesFrame}, as $M$ tends to infinity, the value of $\delta$ in (\ref{eigenvalueslimitsMge2}) can be made arbitrarily small based on the choice of $\epsilon.$ This in turn implies that the two eigenvalues can be made arbitrarily close to each other with $\epsilon.$
On the other hand, for a fixed $M,$ as $\epsilon$ tends to zero, (\ref{eigenvalueslimitsMge2}) becomes
$$\frac M2\left(1 - \sqrt{\frac 1M} \right) \leq E(\lambda_{\min}(\mathcal{F}))
\leq E(\lambda_{\max}(\mathcal{F})) \leq \frac M2\left(1 + \sqrt{\frac 1M}\right).$$
%This indicates that as the size of the frame $V,$ i.e. $M,$ increases it gets closer and closer to a tight frame in the sense that the expectation of the eigenvalues get closer to $1.$
\end{remark}
\subsection{Frames in $\bm{\mathbb{C}^d;}$ $\bm{d > 2}$}
For general $d$ and $M,$ in order to use existing results on the concentration of eigenvalues of random matrices \cite{Bai99,Ledoux2001}, a slightly different construction of the frame needs to be considered. Let $\{Y_{mn}\}_{m,n \in \mathbb{Z}}$ be i.i.d.~ random variables following a Gaussian distribution with mean zero and variance $\sigma^2.$
It can be shown that $$E(e^{\frac{2\pi}{\epsilon} i Y_{mn}}) = e^{-\frac{\sigma^2}{2}\left(\frac{2\pi}{\epsilon}\right)^2}$$
and the variance
$$V(e^{\frac{2\pi}{\epsilon} i Y_{mn}}) = 1 - e^{-\sigma^2\left(\frac{2\pi}{\epsilon}\right)^2}.$$
 One can define the following two dimensional sequence. For $m, n \in \mathbb{Z},$
$$X_{mn} = e^{\frac{2\pi}{\epsilon} i Y_{mn}} - e^{-\frac{\sigma^2}{2}\left(\frac{2\pi}{\epsilon}\right)^2}.$$
%Then $X_{mn}$ has mean zero and variance $\widehat{\sigma}^2 = 1 - e^{-\sigma^2\left(\frac{2\pi}{\epsilon}\right)^2}.$
Consider the mapping $v:\mathbb{Z} \to \mathbb{C}^d$ given by
\begin{equation}\label{generalframe}
v(\ell) = \frac{1}{\sqrt{d}}\left(\begin{array}{c}
X_{1\ell} \\
X_{2\ell} \\
\vdots\\
X_{d\ell}
\end{array}\right).
\end{equation}
As before, let $M \geq d$ and consider the set of $M$ unit vectors $V = \{v(1), v(2), \ldots, v(M)\}$ in $\mathbb{C}^d.$ The frame operator of this set is
\begin{displaymath}
\mathcal{F} = \frac{1}{d}
\left[
\begin{array}{cccc}
X_{11} & X_{12} & \cdots & X_{1M} \\
X_{21} & X_{22} & \cdots  & X_{2M} \\
\vdots & \vdots & \cdots & \vdots \\
X_{d1} & X_{d2} & \cdots & X_{dM} \\
\end{array}
\right] \left[
\begin{array}{cccc}
\overline{X}_{11} & \overline{X}_{21} & \cdots  & \overline{X}_{d1} \\
\overline{X}_{12} & \overline{X}_{22} & \cdots & \overline{X}_{d2} \\
\vdots & \vdots & \cdots & \vdots \\
\overline{X}_{1M} & \overline{X}_{2M} & \cdots & \overline{X}_{dM} \\
\end{array}
\right].
\end{displaymath}
Let
\begin{equation} \label{PreFrameOperator}
A = \frac{1}{\sqrt{d}}
\left[
\begin{array}{cccc}
X_{11} & X_{12} & \cdots & X_{1M} \\
X_{21} & X_{22} & \cdots  & X_{2M} \\
\vdots & \vdots & \cdots & \vdots \\
X_{d1} & X_{d2} & \cdots & X_{dM}
\end{array}
\right]
\end{equation}
so that $\mathcal{F} = AA^*.$ The matrix $A$ has entries with mean zero and variance $\hat{\sigma}^2 = \frac 1d(1 - e^{-\sigma^2\left(\frac{2\pi}{\epsilon}\right)^2}).$ According to results in \cite{Bai99}, if $\frac dM \to c $ as $d, M \to \infty,$ then the smallest and largest eigenvalues of $\mathcal{F}$ converge almost surely to $\hat{\sigma}^2(1 - \sqrt{c})^2$ and $\hat{\sigma}^2(1 + \sqrt{c})^2$ respectively.
\begin{theorem}\label{EValueDeviation}
Let $s_1(A) \leq s_2(A) \leq \ldots \leq s_d(A)$ be the singular values of the matrix $A$ given by (\ref{PreFrameOperator}). Then the following hold.
\\
(a) Given $\epsilon_0,$ there is a large enough $d$ such that
\begin{equation}\label{largestsingularvalue}
P\left(s_d(A) \geq \hat{\sigma}\left(1 + \sqrt{\frac dM}\right) + \epsilon_0 + r\right) \leq 2 e^{-r^2 d/16}.
\end{equation}
(b)
\begin{equation}\label{smallestsingularvalue}
P(s_1(A) \leq c_1) \leq e^{-c_2 M}
\end{equation}
where $c_1$ and $c_2$ are universal positive constants.
\end{theorem}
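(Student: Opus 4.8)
The plan is to handle the two parts with the two complementary tools of random matrix theory: a concentration inequality for the largest singular value in (a), and a non-asymptotic lower bound on the smallest singular value in (b).

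For part (a), I would first note that $s_d(A) = \|A\|$ is the operator norm of $A$, and that as a function of the $dM$ matrix entries it is both convex and $1$-Lipschitz with respect to the Frobenius (hence entrywise Euclidean) norm, since $|\,\|A\| - \|B\|\,| \leq \|A-B\| \leq \|A-B\|_F$. The entries of $A$ are $X_{mn}/\sqrt d$, and each $X_{mn} = e^{\frac{2\pi}{\epsilon}iY_{mn}} - e^{-\frac{\sigma^2}{2}(\frac{2\pi}{\epsilon})^2}$ ranges over a circle of radius $1$ in $\mathbb{C}$, a set of diameter $2$; thus the entries of $A$ take values in a set of diameter $2/\sqrt d$. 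Because the $Y_{mn}$ are independent, I can apply Talagrand's concentration inequality for convex Lipschitz functions of independent bounded variables (as presented in \cite{Ledoux2001}): rescaling so the entries have diameter $1$ turns $s_d(A)$ into a $(2/\sqrt d)$-multiple of a convex $1$-Lipschitz function, and the inequality yields the one-sided bound $P(s_d(A) \geq \mathrm{med}(s_d(A)) + r) \leq 2 e^{-r^2 d/16}$, the factor $16$ arising from the diameter $2$ of the entry range together with the universal constant in Talagrand's inequality.

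It then remains to identify the centering constant. Here I would invoke the Bai--Yin type result quoted just before the theorem (from \cite{Bai99}): as $d, M \to \infty$ with $d/M \to c$, one has $s_d(A) \to \hat{\sigma}(1 + \sqrt c)$ almost surely. Combined with the concentration bound just established, this forces $\mathrm{med}(s_d(A)) \to \hat{\sigma}(1+\sqrt c)$, while $\sqrt{d/M} \to \sqrt c$; hence, given $\epsilon_0 > 0$, for all sufficiently large $d$ one has $\mathrm{med}(s_d(A)) \leq \hat{\sigma}(1 + \sqrt{d/M}) + \epsilon_0$. Substituting into the concentration bound gives (\ref{largestsingularvalue}). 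The main obstacle in (a) is precisely this replacement of the a priori unknown median by the explicit quantity $\hat{\sigma}(1 + \sqrt{d/M})$: it is inherently asymptotic, which is why the statement must carry the slack $\epsilon_0$ and the hypothesis that $d$ be large enough.

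For part (b), I would observe that $s_1(A)$, the smallest singular value of the wide $d \times M$ matrix $A$, equals the smallest singular value of the tall $M \times d$ matrix $A^*$, i.e.\ $s_1(A) = \min_{\|x\|=1}\|A^* x\|$. The desired exponential estimate is then a non-asymptotic lower bound on the smallest singular value of a tall random matrix with i.i.d.\ mean-zero entries, of the type available in the random matrix theory underlying \cite{Bai99, Ledoux2001}. To apply it I must verify that the entries $X_{mn}$ meet the hypotheses: they have mean zero by construction, they are bounded and therefore subgaussian, and --- crucially --- they are genuinely two-dimensional, since their values fill an entire circle in $\mathbb{C}$, so neither the real nor the imaginary part degenerates and the law satisfies the needed small-ball anti-concentration condition. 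Granting these, the theorem delivers universal $c_1, c_2 > 0$ with $P(s_1(A) \leq c_1) \leq e^{-c_2 M}$. The hard part here is exactly checking the anti-concentration of this specific complex entry law and transferring a statement usually phrased for real matrices to the complex-valued setting (for instance by passing to the $2M \times 2d$ real representation of $A^*$), since smallest-singular-value bounds are sensitive to the degeneracies that the largest-singular-value argument in (a) never feels.
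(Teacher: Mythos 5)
Your proposal follows essentially the same route as the paper: part (a) via the Talagrand/Ledoux concentration inequality for convex $1$-Lipschitz functions of bounded independent entries, with the median located using the Bai--Yin almost-sure limit, and part (b) by invoking an existing non-asymptotic lower bound on the smallest singular value of a tall random matrix with i.i.d.\ mean-zero bounded entries (the paper cites Litvak--Pajor--Rudelson--Tomczak-Jaegermann as used by Cand\`es--Tao, under the aspect-ratio condition $M>(1+\delta)d$). Your added remarks on verifying anti-concentration and passing to the real representation in (b) are more detail than the paper supplies, but not a different argument.
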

\begin{proof}
Let $s_d$ be the mapping that associates to a matrix $A$ it largest singular value. Equip $\mathbb{C}^{dM}$ with the Frobenius norm
$$\|A\|^2 := \textrm{Tr}(AA^*) = \sum_{m,n}|A_{mn}|^2.$$ Then the mapping $s_d$ is convex and $1$-Lipschitz in the sense that %\cite{CanTao06}
$$|s_d(A) - s_d(A')|\leq \|A - A'\|$$ for all pairs $(A, A')$ of $d$ by $M$ matrices \cite{Ledoux2001}.

We think of $A$ as a random vector in $\mathbb{R}^{2dM}.$ The real and imaginary parts of the entries of $\frac{1}{\sqrt{d}}A$ are supported in $[-\frac{1}{\sqrt{d}}, \frac{1}{\sqrt{d}}].$
Let $P$ be a product measure on $[-\frac{1}{\sqrt{d}}, \frac{1}{\sqrt{d}}]^{2dM}$ Then as consequence of concentration inequality (Corollary $4.10$, \cite{Ledoux2001}) we have
$$P(|s_d(A) - m(s_d)| \geq r) \leq 4 e^{-r^2 d/16}$$ where $m(s_d)$ is the median of $s_d(A).$ It is known that the minimum and maximum singular values of $A$ converge almost surely to $\hat{\sigma}(1 - \sqrt{c})$ and $\hat{\sigma}(1 + \sqrt{c})$ respectively as $d, M $ tend to infinity and $\frac{d}{M} \to c$. As a consequence, for each $\epsilon_0$ and $M$ sufficiently large, one can show that the medians belong to the fixed interval $[\hat{\sigma}(1 - \sqrt{\frac dM}) - \epsilon_0,\hat{\sigma}(1 + \sqrt{\frac dM}) + \epsilon_0]$ which gives
$$P\left(s_d(A) \geq \hat{\sigma}\left(1 + \sqrt{\frac dM}\right) + \epsilon_0 + r\right) \leq 2 e^{-r^2 d/16}.$$

For the smallest singular value we cannot use the concentration inequality as used for $s_d$ since the smallest singular value is not convex. However, following results in \cite{LPRTJ2004} (Theorem 3.1) that have been used in \cite{CanTao06} in a similar situation as here, one can say that whenever $M > (1+\delta)d$ where $\delta$ is greater than a small constant,
\begin{equation*} %\label{smallestsingularvalue}
P(s_1(A) \leq c_1) \leq e^{-c_2M}
\end{equation*}
where $c_1$ and $c_2$ are positive universal constants.
\end{proof}
\begin{remark}
\rm
Note that the square of the singular values of $A$ are the eigenvalues of $\mathcal{F}$ and so the estimates given in (\ref{largestsingularvalue})-(\ref{smallestsingularvalue}) give insight into the corresponding deviation of the eigenvalues of the frame operator $\mathcal{F}.$
\end{remark}
\begin{remark}[\emph{Connection to compressed sensing}] \rm
The theory of compressed sensing \cite{Candes_06,CRT06, Don06} states that it is possible to recover a sparse signal from a small number of measurements. A signal $x\in \mathbb{C}^M$ is $k$-sparse in a basis $\Psi=\{\psi_j\}_{j=1}^M$ if $x$ is a weighted superposition of at most $k$ elements of $\Psi$. Compressed sensing broadly refers to the inverse problem of reconstructing such a signal $x$ from linear measurements
$\{y_\ell=\langle x,\phi_\ell\rangle \ | \ \ell=1,\ldots, d\}$ with $d<M$, ideally with $d\ll M$. In the general setting, one has $\Phi^* x = y$, where $\Phi$ is a $d \times M$ sensing matrix having the measurement vectors $\phi_\ell$ as its columns, $x$ is a length-$M$ signal and $y$ is a length-$d$ measurement.

The standard compressed sensing technique guarantees exact recovery of the original signal with very high probability if the sensing matrix satisfies the Restricted Isometry Property (RIP). This means that for a fixed $k$, there exists a small number $\delta_k$, such that
\begin{equation*}%\label{RIP}
 (1-\delta_k)\|x\|^2_{\ell_2}\leq\|\Phi x\|^2_{\ell_2}\leq(1+\delta_k)\|x\|^2_{\ell_2},
\end{equation*}
for any $k$-sparse signal $x$.
By imitating the work done in \cite{CanTao06} (Lemmas 4.1 and 4.2), it can be shown, due to Theorem \ref{EValueDeviation}, that matrices $A$ of the type given in (\ref{PreFrameOperator}) satisfy the RIP condition and can therefore be used as measurement matrices in compressed sensing. These matrices are different from the traditional random matrices used in compressed sensing in that their entries are complex-valued unimodular instead of  being real-valued and not unimodular.
\end{remark}
\begin{example}
\rm
This example illustrates the ideas in this subsection.
First consider $M = 5$ and $d=3$ so that there are $5$ vectors in $\mathbb{C}^3.$ Taking from a normal distribution with mean $0$ and variance $\sigma = 1,$  a realization of the matrix $[Y_{mn}]_{1\leq m \leq 3, 1\leq n \leq 5}$ is
\begin{displaymath}
\left[\begin{array}{ccccc}
-0.0353  &  0.5004 &  -0.6299 &  -0.1472  &  0.4003 \\
   -0.4804 &  -0.9344 &   0.4220 &  -0.9509  &  0.2783 \\
   -0.8609  & -0.4822  & -0.4680 &  -0.0913 &   1.2284 \\
\end{array}\right].
\end{displaymath}
Then taking $\epsilon = 0.001,$ $A = \frac{1}{\sqrt{3}}[e^{\frac{2\pi}{\epsilon}iY_{mn}}]$ is
\begin{displaymath}
A = \frac{1}{\sqrt{3}}\left[\begin{array}{ccccc}
-0.27 - 0.96i & -0.89 + 0.44i &  0.85 + 0.52i &   0.33 - 0.94i &  -0.24 + 0.97i \\
  -0.92 - 0.39i & -0.89 - 0.46i &  0.99 + 0.05i &   0.93 + 0.37i & -0.47 + 0.88i \\
   0.74 + 0.68i &  0.16 - 0.99i &  0.99 + 0.09i & -0.30 - 0.95i & -0.74 + 0.67i
\end{array}\right].
\end{displaymath}
The condition number, ratio of the maximum and minimum eigenvalues, of $\mathcal{F} = 4.8667.$ As the number of vectors $M$ is increased, the condition number gets closer to $1.$ Figure \ref{fig:CondNoFrameMatrix} shows the behavior of the condition number with the increase in the number of vectors. % $M.$
\begin{figure}[t]
\begin{center}
\includegraphics[width=4in]{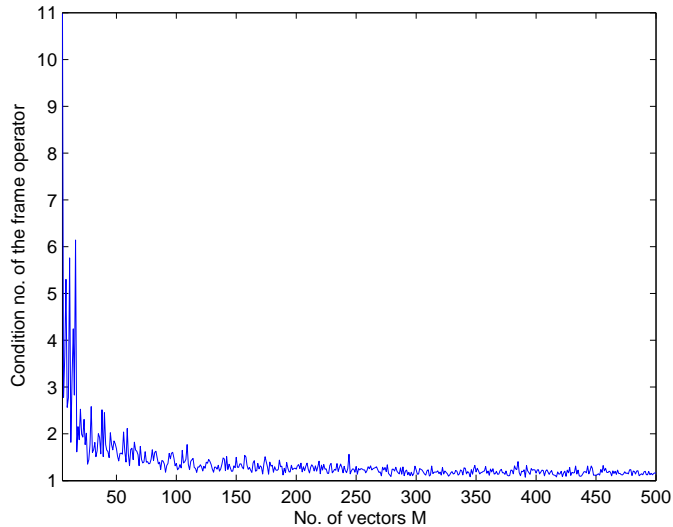}
%\end{center}
%\caption{Error estimates of the exponential Hadamard code; $\epsilon = 0.2$.}
\caption{Behavior of the condition number of the frame operator with increasing size of the frame; $\epsilon = 0.0001, d = 3, \sigma = 1$}
%\label{paper1:fig1}
\label{fig:CondNoFrameMatrix}
\end{center}
\end{figure}
\end{example}
%The eigenvalues of the matrix $\mathcal{F}$ can be analyzed using concentration of measure results for random matrices \cite{AKV2002}. To use the results and the proofs given in \cite{AKV2002} one considers the matrix $$X = \left[\begin{array}{cc}
%0 & F\\
%F^* & 0
%\end{array}\right]$$
%so that $$X^2 = \left[\begin{array}{cc}
%FF^* & 0\\
%0 & F^*F
%\end{array}\right].$$
%The quantities of interest are estimates on the largest and smallest eigenvalues of the matrix $FF^*.$ The matrices $FF^*$ and $F^*F$ had the same non-zero eigenvalues. Also, the eigenvalues of $X^2$ are given by the eigenvalues of $FF^*$ and the eigenvalues of $F^*F.$ For the problem at hand one wishes that the smallest and largest eigenvalues of $\frac dM \mathcal{F}$ are close to each other and close to $1.$ Let the eigenvalues of $X^2$ be $\lambda_1 \geq \lambda_2 \geq \ldots \geq \lambda_{M+d}.$ The matrix $F^*F$ is $M \times M$ while $FF^*$ is $d \times d$ and given that $d \leq M$ we know that there at least $M-d$ zero eigenvalues of $X^2.$ So one is interested in $\lambda_1$ and $\lambda_{2d}.$ According to the results given in \cite{AKV2002} the probability that the $s$th eigenvalue deviates by more than a given $\epsilon$ is at most $4e^{-\epsilon^2/32s^2},$ i.e., for $1\leq s \leq M+d,$
%$$P(|\lambda_s - \widetilde{M}[\lambda_s]|>\epsilon) \leq 4e^{-\epsilon^2/32s^2}$$
%where $\widetilde{M}$ is the median, see Appendix for details.
%
\section{Conclusions}
The construction of discrete unimodular stochastic waveforms with arbitrarily small expected autocorrelation has been proposed. This is motivated by the usefulness of such waveforms in the areas of radar and communications. The family of random variables that can be used for this purpose has been characterized. Such construction been done in one dimension and generalized to higher dimensions. Further, such waveforms have been used to construct frames in $\mathbb{C}^d$ and the frame properties of such frames have been studied. Using Brownian motion, this idea is also extended to the construction of continuous unimodular stochastic waveforms whose autocorrelation can be made arbitrarily small in expectation.
\section*{Acknowledgments}
The author wishes to acknowledge support from AFOSR Grant No. FA9550-10-1-0441 for conducting this research. The author is also grateful to Frank Gao and Ross Richardson for their generous help with probability theory.
%

%% References
%%
%% Following citation commands can be used in the body text:
%% Usage of \cite is as follows:
%%   \cite{key}         ==>>  [#]
%%   \cite[chap. 2]{key} ==>> [#, chap. 2]
%%

%% References with bibTeX database:
\bibliographystyle{elsarticle-num}
%\bibliographystyle{plain}
%\bibliography{<your-bib-database>}
\bibliography{StochasticWaveforms_refs}
%% Authors are advised to submit their bibtex database files. They are
%% requested to list a bibtex style file in the manuscript if they do
%% not want to use elsarticle-num.bst.

%% References without bibTeX database:

% \begin{thebibliography}{00}

%% \bibitem must have the following form:
%%   \bibitem{key}...
%%

% \bibitem{}

% \end{thebibliography}

\end{document}